\documentclass{article}
\usepackage{graphicx} 
\usepackage{fullpage}

\usepackage{mathpazo}

\usepackage[T1]{fontenc}

\usepackage{amsmath,amsfonts,amssymb}
\usepackage{mathtools}
\usepackage{amsthm} 

\usepackage{tikz}
\usetikzlibrary{calc,arrows.meta,positioning}

\definecolor{NLSDBlue}{HTML}{123B5D}
\definecolor{NLSDGold}{HTML}{D8A528}
\definecolor{NLSDGray}{HTML}{65727C}

\tikzset{
  nlsdarrow/.style={-{Latex[length=2mm]},thick,draw=NLSDGray},
  nlsdstate/.style={draw=NLSDBlue,thick,rounded corners=2pt,fill=white,
    inner xsep=7pt,inner ysep=5pt,align=center},
  nlsdparty/.style={circle,draw=NLSDBlue,thick,fill=white,minimum size=8mm,
    font=\bfseries},
  nlsdchallenge/.style={draw=NLSDGold,thick,rounded corners=2pt,
    fill=NLSDGold!8,inner xsep=7pt,inner ysep=5pt,align=center},
  nlsdresult/.style={draw=NLSDBlue,thick,rounded corners=2pt,
    fill=NLSDBlue!5,inner xsep=7pt,inner ysep=5pt,align=center}
}

\usepackage{physics}

\usepackage{enumitem}
\usepackage{float} 

\usepackage{mdframed}
\usepackage[
  hidelinks,
  pdftitle={Non-Local Search-to-Decision over F2},
  pdfauthor={Prabhanjan Ananth}
]{hyperref}

\usepackage{longtable}
\usepackage{array}

\newtheorem{theorem}{Theorem}[section]

\newtheorem{lemma}[theorem]{Lemma}
\newtheorem{claim}[theorem]{Claim}

\newtheorem{corollary}[theorem]{Corollary}

\renewcommand{\bra}[1]{\langle#1\rvert}

\renewcommand{\ket}[1]{\lvert#1\rangle}

\newcommand{\bfG}{\mathbf{G}}

\newcommand{\bfX}{\mathbf{X}}
\newcommand{\bfY}{\mathbf{Y}}
\newcommand{\bfZ}{\mathbf{Z}}

\newcommand{\regB}{{\color{gray} \mathbf{B}}}
\newcommand{\regC}{{\color{gray} \mathbf{C}}}

\newcommand{\regE}{{\color{gray} \mathbf{E}}}

\newcommand{\regR}{{\color{gray} \mathbf{R}}}
\newcommand{\regS}{{\color{gray} \mathbf{S}}}

\newcommand{\regX}{{\color{gray} \mathbf{X}}}

\newcommand{\linear}{{\cal L}}
\newcommand{\density}{{\cal D}}

\newcommand{\bob}{{\cal B}}
\newcommand{\charlie}{{\cal C}}
\newcommand{\expect}{\mathbb{E}}

\newcommand{\ftwo}{\mathbb{F}_2}

\newcommand{\srch}{{\sf srch}}
\newcommand{\pred}{{\sf pred}}
\newcommand{\predpovm}{{\cal P}}

\newcommand{\extractor}{{\cal E}}

\title{Non-Local Search-to-Decision over $\mathbb{F}_2$}
\author{Prabhanjan Ananth\footnote{\texttt{prabhanjan@cs.ucsb.edu}}\\ {\footnotesize UCSB}}
\date{}

\newcommand{\HH}{\mathrm{HH}}
\newcommand{\HL}{\mathrm{HL}}
\newcommand{\LH}{\mathrm{LH}}
\newcommand{\LL}{\mathrm{LL}}

\begin{document}
\maketitle

\begin{abstract}
\noindent Non-local search-to-decision asks whether two noncommunicating parties,
given the two shares of a bipartite encoding of a uniformly random string
$x\in\ftwo^n$, can both predict the same random parity $\langle r,x\rangle$
without there also being local measurements with which both parties recover
$x$.  We prove that if their optimal probability of both recovering $x$ by
local measurements is $p_{\srch}$, then their probability of both answering a
common parity challenge correctly is at most
$\min\{1,\frac12+5p_{\srch}^{1/22}\}$.  The result is motivated by applications to unclonable encryption and quantum copy-protection.
The proof is information-theoretic and does not provide an efficient
extractor.  The proof and the exposition were developed with
assistance from ChatGPT using GPT-5.6 Sol Pro and Codex at ultra reasoning
effort.
\end{abstract}

\section{Introduction}

Informally, imagine that a dealer encodes a uniformly random string
$x\in\mathbb{F}_2^n$ into a bipartite state and gives one share to Bob and
the other to Charlie.  The two parties are separated and receive the same
uniformly random vector $r\in\mathbb{F}_2^n$; each must return the parity
$\langle r,x\rangle$.  By using the common challenge $r$ to derive the same
fair bit independent of $x$---for example, its first coordinate---Bob and
Charlie succeed with probability exactly $1/2$; we use this as the
shared-random-guessing baseline.  The non-local
search-to-decision question asks whether a noticeable improvement over this
baseline can occur unless there are local measurements with which Bob and
Charlie both recover the entire hidden string $x$ in the same experiment.

\par Formally, let $x$ be uniform in $\ftwo^n$, let
$\regX\cong(\mathbb C^2)^{\otimes n}$ have computational basis
$\{\ket{x}:x\in\ftwo^n\}$, and let a quantum channel $\Phi$ encode the
classical basis state $\ketbra{x}{x}$ as the bipartite state
\[
 \rho_x:=\Phi(\ketbra{x}{x})
\]
on registers held by two noncommunicating parties, Bob and Charlie.  Write
$\boldsymbol{\rho}:=\{\rho_x\}_{x\in\ftwo^n}$ for this ensemble under
the uniform prior on $x$.  

\begin{figure}[H]
\centering
\begin{tikzpicture}[node distance=0.9cm and 1.15cm]
  \node[nlsdchallenge] (sample) {$x\xleftarrow{\$}\ftwo^n$};
  \node[nlsdstate,right=of sample] (input) {$\ketbra{x}{x}$};
  \node[nlsdstate,right=of input,minimum width=1.15cm,minimum height=0.9cm]
    (channel) {$\Phi$};
  \node[nlsdparty,above right=0.48cm and 1.45cm of channel] (bob) {$\bob$};
  \node[nlsdparty,below right=0.48cm and 1.45cm of channel] (charlie) {$\charlie$};
  \draw[nlsdarrow] (sample)--(input);
  \draw[nlsdarrow] (input)--(channel);
  \draw[nlsdarrow] (channel)--node[above,sloped,font=\small]
    {register $\regB$} (bob);
  \draw[nlsdarrow] (channel)--node[below,sloped,font=\small]
    {register $\regC$} (charlie);
  \node[below=0.62cm of channel,text=NLSDGray,font=\small]
    {$\rho_x=\Phi(\ketbra{x}{x})$};
\end{tikzpicture}
\caption{The information-distribution stage.  A uniformly random string is
encoded by an arbitrary channel into the bipartite state $\rho_x$.  Bob and
Charlie receive disjoint registers and cannot communicate.}
\label{fig:nl-s2d-distribution}
\end{figure}

\noindent Their
optimal probability that both parties recover the full string using local
POVMs is
\begin{equation}
 p_{\srch}(\boldsymbol{\rho})
 :=
 \sup_{\Lambda_{\bob}^{\srch},\Lambda_{\charlie}^{\srch}}
 \expect_x\left[
 \Tr\left(
  (\extractor_{\bob}^x\otimes\extractor_{\charlie}^x)\rho_x
 \right)
 \right],
 \label{eq:intro-search-probability}
\end{equation}
where the supremum is over local POVMs
$\Lambda_{\bob}^{\srch}=\{\extractor_{\bob}^z\}_{z\in\ftwo^n}$ and
$\Lambda_{\charlie}^{\srch}=\{\extractor_{\charlie}^z\}_{z\in\ftwo^n}$.
For every $r\in\ftwo^n$, now consider the local binary-outcome POVMs
\[
 \Lambda_{\bob}^{\pred}[r]
 =\{\predpovm_{\bob}^{r,0},\predpovm_{\bob}^{r,1}\},
 \qquad
 \Lambda_{\charlie}^{\pred}[r]
 =\{\predpovm_{\charlie}^{r,0},\predpovm_{\charlie}^{r,1}\}.
\]
If the same
uniformly random challenge $r$ is given to both parties, their joint
parity-prediction probability is
\begin{equation}
 p_{\pred}
 :=
 \expect_{x,r}\left[
 \Tr\left(
  \left(
   \predpovm_{\bob}^{r,\langle r,x\rangle}
   \otimes
   \predpovm_{\charlie}^{r,\langle r,x\rangle}
  \right)\rho_x
 \right)
 \right],
 \label{eq:intro-prediction-probability}
\end{equation}
where the inner product is over $\ftwo$. 

\begin{figure}[H]
\centering
\begin{tikzpicture}[x=1cm,y=1cm]
  \begin{scope}[shift={(-3.3,0)}]
    \node[font=\bfseries,text=NLSDBlue] at (0,2.75) {Search experiment};
    \node[nlsdstate] (searchstate) at (0,2.05) {$\rho_x$};
    \node[nlsdparty] (searchB) at (-1.25,0.85) {$\bob$};
    \node[nlsdparty] (searchC) at (1.25,0.85) {$\charlie$};
    \draw[nlsdarrow] (searchstate)--node[above,sloped,font=\scriptsize]
      {$\regB$} (searchB);
    \draw[nlsdarrow] (searchstate)--node[above,sloped,font=\scriptsize]
      {$\regC$} (searchC);
    \node[nlsdstate,below=0.55cm of searchB] (zB) {$z_{\bob}\in\ftwo^n$};
    \node[nlsdstate,below=0.55cm of searchC] (zC) {$z_{\charlie}\in\ftwo^n$};
    \draw[nlsdarrow] (searchB)--(zB);
    \draw[nlsdarrow] (searchC)--(zC);
    \node[nlsdresult,text width=4.5cm] at (0,-1.15)
      {win if $z_{\bob}=z_{\charlie}=x$};
  \end{scope}

  \draw[NLSDGray!45,thick] (0,-1.55)--(0,2.95);

  \begin{scope}[shift={(3.3,0)}]
    \node[font=\bfseries,text=NLSDBlue] at (0,2.75)
      {Identical-challenge decision};
    \node[nlsdstate] (predstate) at (0,2.28) {$\rho_x$};
    \node[nlsdparty] (predB) at (-1.25,0.85) {$\bob$};
    \node[nlsdparty] (predC) at (1.25,0.85) {$\charlie$};
    \node[nlsdchallenge] (rleft) at (-2.35,0.85) {$r$};
    \node[nlsdchallenge] (rright) at (2.35,0.85) {$r$};
    \draw[nlsdarrow] (predstate) to[out=205,in=100]
      node[above,sloped,font=\scriptsize] {$\regB$} (predB);
    \draw[nlsdarrow] (predstate) to[out=335,in=80]
      node[above,sloped,font=\scriptsize] {$\regC$} (predC);
    \draw[-{Latex[length=2mm]},thick,draw=NLSDGold]
      (rleft)--(predB);
    \draw[-{Latex[length=2mm]},thick,draw=NLSDGold]
      (rright)--(predC);
    \node[nlsdstate,below=0.55cm of predB] (bB) {$b_{\bob}\in\ftwo$};
    \node[nlsdstate,below=0.55cm of predC] (bC) {$b_{\charlie}\in\ftwo$};
    \draw[nlsdarrow] (predB)--(bB);
    \draw[nlsdarrow] (predC)--(bC);
    \node[nlsdresult,text width=4.8cm] at (0,-1.15)
      {win if $b_{\bob}=b_{\charlie}=\langle r,x\rangle$};
  \end{scope}
\end{tikzpicture}
\caption{The search and decision experiments for identical-challenge
non-local search-to-decision.  Full-string recovery is tested on the left.  On the right,
the same random challenge $r$ is delivered to both parties and both must
return its inner product with $x$.  The two nodes labelled $r$ represent two
classical copies of one uniformly sampled challenge, not independent samples.}
\label{fig:nl-s2d-games}
\end{figure}

The non-local search-to-decision problem asks whether $p_{\pred}$ can exceed
the shared-random-guessing baseline $1/2$ by a noticeable amount only when
both parties can also recover $x$, using local POVMs, with noticeable
probability in the same experiment.  This is a statement about their joint
success event, rather than merely about large individual success
probabilities.

\paragraph{Relationship with the Goldreich--Levin theorem.}
The classical Goldreich--Levin theorem is a canonical
reconstruction-from-prediction theorem and, in this sense, a
search-to-decision reduction: the ability to predict a random inner product
with noticeable advantage can be converted into recovery of the underlying
string.  Non-local variants differ along several independent axes---the
underlying field; the joint distribution of the two hidden strings; the joint
distribution of the two challenge vectors, which may be identical,
independent, or correlated; whether the parties receive one or many linear
samples; and whether acceptance requires two exact inner products or only a
relation between the answers.

We deliberately use the term \emph{non-local search-to-decision} rather than
\emph{simultaneous Goldreich--Levin}.  The latter terminology naturally
suggests a constructive extraction theorem that transforms the prediction
strategies into implementable extraction measurements.  Our proof constructs
information-theoretic witness POVMs, but it does not give a uniform efficient
black-box procedure for constructing or implementing them from efficient
prediction measurements.  The name \emph{non-local search-to-decision}
emphasizes the quantitative implication proved here without suggesting such
an efficient constructive transformation.

We organize the versions most relevant to this work as follows.
\begin{itemize}
\item \textbf{Single-predictor foundations.}
The original theorem is classical and binary~\cite{GL89}.  Adcock and Cleve
proved a quantum version~\cite{AC02}, and Coladangelo, Liu, Liu, and Zhandry
gave an auxiliary-quantum-input formulation used in unclonable
cryptography~\cite{CLLZ21}.

\item \textbf{Two parties with independent challenges.}
Ananth, Kaleoglu, and Liu prove a shared-secret $\mathbb F_2$ theorem in
which Bob and Charlie receive independent challenge vectors; their direct
reduction gives an efficient extractor whenever the prediction strategy is
efficient~\cite{AKL23}.  Kundu and Tan allow the two hidden strings to be
arbitrarily jointly distributed and prove a related result over
$\mathbb F_2$ and $\mathbb F_3$, again with independent
challenges~\cite{KT25}.  Broadbent, Karvonen, and Lord recast and apply this
independent-challenge principle to uncloneable quantum advice~\cite{BKL24}.

\item \textbf{Many linear samples.}
Coladangelo and Gunn prove a many-bit extension using independent random
binary matrices $U,V$ and masks derived from $Ux,Vx$.  They use this
formulation in a construction of coupled unclonable encryption~\cite{CG24}.

\item \textbf{Correlated auxiliary inputs and relational acceptance.}
Ananth, Kaleoglu, and Yuen consider a common binary secret and independent
challenge vectors, but correlated auxiliary bits: the two true inner
products are padded by a common uniform bit.  Their reduction passes through
a relaxed output condition in which only the XOR of Bob's and Charlie's
answers must be correct~\cite{AKY25}.

\item \textbf{Parameterized and larger-field formulations.}
Ananth and Behera parameterize non-local inner-product problems by
distributions on the two secrets, the two challenge samples, and the two
binary selector bits, and state identical- and independent-sample
conjectures over large prime fields~\cite{AB24}.  Their principal task is
binary graph-versus-uniform distinction rather than exact field-valued
prediction; the large-field statements are conjectural.  Chevalier, Hermouet,
and Vu formulate related conjectures about simultaneous extraction for
compute-and-compare obfuscation, including correlated challenges, and
explicitly compare them with the simultaneous Goldreich--Levin conjectures of
Ananth and Behera~\cite{CHV23}.

\item \textbf{Identical challenges and adjacent restricted settings.}
The present paper studies a shared secret and a single challenge vector that
is copied to both parties.  In adjacent structured settings, \c{C}akan and
Goyal use a one-party quantum Goldreich--Levin argument in their analysis of
LOCC leakage-resilient ciphertexts~\cite{CakanGoyal24}, while Champion,
Kitagawa, Nishimaki, and Yamakawa develop information-theoretic
untelegraphable encryption in the plain model~\cite{CKNY25}.  These are
structured leakage-resilience or no-telegraphing results, not general
identical-challenge theorems.  Coladangelo, Liu, and Xie analyze a related identical-challenge
Goldreich--Levin-style monogamy game and prove exponential decay for their
restricted class of \emph{semi-classical} strategies: the referee's register
is unentangled with the joint Bob--Charlie register, and Bob and Charlie
answer using deterministic functions of the classical challenges
~\cite{CLX26}.  Their formal definition permits a joint, potentially
entangled, state on Bob and Charlie; security against general fully quantum
strategies remains conjectural.  Our theorem
instead applies to arbitrary shared-secret ensembles under the uniform prior
on $x$, identical challenges, exact outputs over $\mathbb F_2$, and
unrestricted local quantum measurements.
\end{itemize}

\paragraph{Main Result.}
Our main theorem resolves the shared-secret, identical-challenge,
exact-output problem over $\mathbb F_2$ under the uniform prior on the hidden
string.  Quantitatively, it shows that any advantage in the probability that
Bob and Charlie both predict the common random parity forces the existence of
local POVMs with which both parties recover the entire hidden string in the
same experiment.  Equivalently, if the optimal probability of this joint
local recovery event is negligible, then the probability that both parties
predict correctly is only negligibly larger than the shared-random-guessing
baseline $1/2$.

\begin{theorem}[Informal]
For every bipartite ensemble $\boldsymbol{\rho}$ indexed by a uniformly
distributed $x\in\ftwo^n$ and every pair of local prediction strategies
receiving the same challenge vector,
\[
 p_{\pred}
 \le
 \min\left\{1,\frac12+
 5p_{\srch}(\boldsymbol{\rho})^{1/22}\right\}.
\]
Equivalently, joint parity-prediction advantage $\epsilon$ implies the
existence of local full-string POVMs with which both parties recover $x$ in
the same experiment with probability at least $(\epsilon/5)^{22}$.
\end{theorem}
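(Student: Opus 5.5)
The plan is to pass to $\pm1$ observables and reduce the whole problem to a single Cauchy--Schwarz inequality plus a "transport" of Charlie's observable onto Bob's side.

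\textbf{Step 1: observables and the decomposition of $p_{\pred}$.} By Naimark dilation (absorbed into the local registers) I may assume all prediction POVMs are projective. I then set $B_r:=\predpovm_{\bob}^{r,0}-\predpovm_{\bob}^{r,1}$ and $C_r:=\predpovm_{\charlie}^{r,0}-\predpovm_{\charlie}^{r,1}$, both binary observables with square $I$. Write $s_r(x):=(-1)^{\langle r,x\rangle}$ and purify $\rho_x$ to $\ket{\psi_x}$ with the purifying register given to Bob. Expanding $\predpovm^{r,b}=(I+(-1)^bB_r)/2$ gives
\[
 p_{\pred}=\tfrac14+\tfrac14\expect_{x,r}\big[s_r\langle B_r\otimes I+I\otimes C_r\rangle_{\psi_x}\big]+\tfrac14\expect_{x,r}\langle B_r\otimes C_r\rangle_{\psi_x}.
\]
With $a:=\frac12\expect s_r\langle B_r\otimes I+I\otimes C_r\rangle$ and $\delta:=1-\expect\langle B_r\otimes C_r\rangle$, this reads $p_{\pred}=\frac12+\frac a2-\frac\delta4$. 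So an advantage $\epsilon$ forces $a\ge 2\epsilon$ and $\delta\le 2a-4\epsilon$.

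\textbf{Step 2: the witness POVMs.} Each party runs the coherent Adcock--Cleve Goldreich--Levin circuit~\cite{AC02}: a uniform superposition over $r$, controlled $B_r$ (resp.\ $C_{r}$) with phase kickback, then Hadamard. The outcome-$z$ elements are $G_z^2$ and $H_z^2$, where
\[
 G_z:=\expect_r(-1)^{\langle r,z\rangle}B_r,\qquad H_z:=\expect_r(-1)^{\langle r,z\rangle}C_r .
\]
Since $G_x\otimes H_x$ is Hermitian with square $G_x^2\otimes H_x^2$, the inequality $\langle X^2\rangle\ge\langle X\rangle^2$ followed by Jensen over $x$ gives
\[
 p_{\srch}\;\ge\;\expect_x\langle G_x^2\otimes H_x^2\rangle_{\psi_x}\;\ge\;\Big(\expect_x\expect_{r,r'}s_r s_{r'}\langle B_r\otimes C_{r'}\rangle_{\psi_x}\Big)^2 .
\]
It therefore suffices to lower-bound the "independent-challenge" correlation $\expect s_rs_{r'}\langle B_r\otimes C_{r'}\rangle$ polynomially in $\epsilon$.

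\textbf{Step 3: transport when $\delta$ is small.} For identical challenges,
\[
 \big\|(B_r\otimes I-I\otimes C_r)\ket{\psi_x}\big\|^2=2-2\langle B_r\otimes C_r\rangle,
\]
so on average $I\otimes C_{r'}$ acts on $\psi_x$ like $B_{r'}\otimes I$, with $\ell_2$ error $\sqrt{2\delta}$ after Cauchy--Schwarz over $(x,r')$. Substituting this into the correlation gives
\[
 \expect s_rs_{r'}\langle B_r\otimes C_{r'}\rangle\;\ge\;\expect_x\|G_x\psi_x\|^2-O(\sqrt\delta)\;\ge\;a_{\bob}^2-O(\sqrt\delta),
\]
where $a_{\bob}=\expect s_r\langle B_r\rangle$. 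The same holds symmetrically for Charlie. If $\delta\ll\epsilon^2$, then using $a\ge 2\epsilon$ (so at least one of $a_{\bob},a_{\charlie}$ is at least $2\epsilon$), we obtain $p_{\srch}\gtrsim\epsilon^{4}$.

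\textbf{Step 4 (main obstacle): large $\delta$.} Step 1 only guarantees $\delta\le 2a-4\epsilon$, which can be of order $a$, far above $\epsilon^2$. This is where I expect the real work, and where the poor exponent $22$ presumably originates. My first attempt would be a localization argument. I would split $(x,r)$ into the event where both parties' conditional biases are large and agree, and its complement. On the good part I would show that a restricted pair of observables (agreeing with each other with defect $\mathrm{poly}(\epsilon)$) still carries advantage $\mathrm{poly}(\epsilon)$, using that disagreement costs $\delta/4$ while only advantage on agreeing instances contributes. Alternatively, I would modify the strategies by having each party output a fresh shared-randomness bit (derived from $r$) whenever its conditional confidence is low, which drives $\delta$ down at a polynomial loss in advantage. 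Then I would rerun Step 3 on the modified strategies. Each such reduction costs a square root or a square via Cauchy--Schwarz and Markov. Chaining a few of them with the final squaring in Step 2 should yield a bound of the form $\epsilon\le c\,p_{\srch}^{1/k}$, and careful bookkeeping of constants should give $k=22$, $c=5$; the $\min\{1,\cdot\}$ is trivial.
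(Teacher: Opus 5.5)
\textbf{Verdict: the proposal has a genuine gap.} Step~4 is the whole theorem, and the reduction you set up for it cannot work as stated.

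\textbf{What is correct.} Your Steps~1--3 hold. The identity $p_{\pred}=\frac12+\frac a2-\frac\delta4$ is exact. $\{G_z^2\}_z$ is a POVM by Parseval. With $\kappa:=\expect_{x,r,r'}s_rs_{r'}\langle B_r\otimes C_{r'}\rangle_{\psi_x}$ you get $p_{\srch}\ge\kappa^2$. The transport argument gives $p_{\srch}=\Omega(\epsilon^4)$ once $\sqrt{2\delta}\le2\epsilon^2$; note this requires $\delta=O(\epsilon^4)$, not merely $\delta\ll\epsilon^2$.

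\textbf{Why the route to Step~4 fails.} The quantity $\kappa$ can vanish at constant advantage. Take a classical strategy in which both parties hold $x$ and a shared coin. With probability $1-w$, both output the same bit $f(r)$, which agrees with $\langle r,x\rangle$ on a $(1+s)/2$ fraction of $r$. With probability $w$, Bob outputs $\langle r,x\rangle$ and Charlie outputs $\langle r,x\rangle\oplus1$. Then $p_{\pred}=(1-w)(1+s)/2$ and $\kappa=(1-w)s^2-w$. For $s=1/5$ and $w=1/26$ this gives $p_{\pred}=15/26$ but $\kappa=0$ exactly, with $\delta=1/13\gg\epsilon^4$. Here $p_{\srch}=1$, so this does not contradict the theorem. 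It shows only that your certificate $\kappa^2$ for $p_{\srch}$ is lost: at $\langle X^2\rangle\ge\langle X\rangle^2$, branches with opposite signs of $G_x$ and $H_x$ cancel.

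So Step~4 must change the strategies or the extractor, and your suggestions do not do this. The \emph{conditional confidence} of a party, and the \emph{event that both biases are large and agree}, depend on the unknown $x$ and on the joint entangled state. They are therefore not local operations the parties can perform. If used only as analysis devices, they create cross terms between retained and discarded parts that your sketch never controls. Confidence thresholding also does nothing in the $w$-branch above, where both parties are fully confident and always disagree. Finally, the claim that bookkeeping will yield $k=22$, $c=5$ is unsupported.

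\textbf{The missing idea.} You need a way to control interference between components in which only one party is informed. The paper splits each side at threshold $\tau$ using the spectral projectors $\Pi_x,\Theta_x$ of $\bfX^x$ (your $G_x$) and $\bfY^x$ (your $H_x$). Parseval makes $\{\tau^2\Pi_x\}_x$ and $\{\tau^2\Theta_x\}_x$ sub-POVMs, which disposes of the $\HH$ sector. On the $\HL$ and $\LH$ components your product measurement $G_x^2\otimes H_x^2$ is useless, because one factor acts with norm at most $\tau$. Yet the direct term $\frac14\langle\psi_{\HL}^x|\bfZ|\psi_{\LH}^x\rangle$ and the $\LL$-mediated terms are not small a priori. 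The paper handles this in three moves:
\begin{enumerate}
\item It completes the square in $(\frac12+\eta)I-\bfG_x$, whose quadratic form has margin $\eta-\eta^3/2$ on vectors where either party is low. This eliminates $\LL$.
\item It expands the resulting Schur complement as a Neumann series in $P_{\LL}^x\bfZ P_{\LL}^x$, giving common-challenge paths $\Gamma_m$.
\item It bounds each $\Gamma_m$ with new extractors for a fixed word $w=(r_1,\ldots,r_m)$. These are built from $V^B_{x,w}=\Pi_x\Delta_{\bob}^{r_m}N_x\cdots\Delta_{\bob}^{r_1}N_x$ and $V^C_{x,w}=O_x\Delta_{\charlie}^{r_m}O_x\cdots\Delta_{\charlie}^{r_1}\Theta_x$, normalized by a telescoping and Minkowski argument.
\end{enumerate}
These extractors apply prediction observables before projecting onto the high sector, and your plan has nothing like them. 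The exponent $22$ comes from the choices $\tau=\eta^3$ and weight threshold $\eta^2/3072$, needed to absorb $\sum_m|\Gamma_m|$. It does not come from chaining Cauchy--Schwarz losses.
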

In particular, negligible optimal local recovery probability implies
negligible joint prediction advantage.  The statement is
information-theoretic: even when the prediction measurements are efficient,
the proof does not give a uniform efficient procedure for constructing or
implementing the extraction POVMs.

\begin{corollary}[Application to unclonable encryption]
\label{cor:ue-search-to-ind}
Suppose a one-time secret-key scheme for unclonable encryption of uniform
messages in $\ftwo^{n(\lambda)}$ has information-theoretic search security.
Then there is a one-time secret-key scheme for one-bit messages whose
identical-challenge distinguishing success is at most
$1/2+\operatorname{negl}(\lambda)$.
\end{corollary}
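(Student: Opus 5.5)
We derive Corollary~\ref{cor:ue-search-to-ind} from the main theorem (Theorem~[Informal], proved later in full) by the standard Goldreich--Levin hardcore-bit transformation; the corollary then costs only a polynomial loss.

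\textbf{The construction.} Let $(\mathsf{Gen},\mathsf{Enc},\mathsf{Dec})$ be the given scheme for messages in $\ftwo^{n}$, $n=n(\lambda)$. The one-bit scheme samples a key $k\leftarrow\mathsf{Gen}(1^\lambda)$ together with a uniform $r\in\ftwo^n$; its key is $(k,r)$. To encrypt $m\in\ftwo$, it samples a uniform $x\in\ftwo^n$ conditioned on $\langle r,x\rangle=m$ and outputs $\mathsf{Enc}_k(x)$. Decryption recovers $x$ and outputs $\langle r,x\rangle$. In the identical-challenge distinguishing game, a cloning channel produces a bipartite state from the ciphertext. Bob and Charlie then both receive the full key $(k,r)$, and each must output $m$. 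Here $m$ is uniform.

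\textbf{Reduction to the theorem.} When $m$ is uniform, $x$ is uniform and independent of $r$, and $m=\langle r,x\rangle$. Fix the cloner $\Phi'$ and the key-generation randomness. For each $x$, define the bipartite state
\[
 \rho_x:=\expect_{k}\bigl[\ketbra{k}{k}_{\bob}\otimes\ketbra{k}{k}_{\charlie}\otimes\Phi'(\mathsf{Enc}_k(x))\bigr],
\]
so that $k$ is copied classically into both registers. The distinguishing experiment is then exactly the identical-challenge decision game of \eqref{eq:intro-prediction-probability} on $\boldsymbol\rho=\{\rho_x\}$. The common challenge is $r$, and each party's prediction POVM $\predpovm^{r,b}$ is its distinguisher applied to $(k,r)$ and its share. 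Hence the distinguishing success equals $p_{\pred}$.

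Now bound $p_{\srch}(\boldsymbol\rho)$. Any local full-string POVMs on $\boldsymbol\rho$ are, by construction, a search adversary for the original scheme: both parties receive $k$ and a share of the cloned ciphertext of the uniform message $x$, and both must output $x$. Information-theoretic search security therefore gives $p_{\srch}(\boldsymbol\rho)\le\operatorname{negl}(\lambda)$. This holds for unbounded POVMs, which is exactly why the corollary needs information-theoretic rather than computational security. The extractor from the theorem is not efficient, but that does not matter here. Applying the theorem then gives
\[
 p_{\pred}\le\tfrac12+5\,\operatorname{negl}(\lambda)^{1/22}=\tfrac12+\operatorname{negl}(\lambda).
\]

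\textbf{Main obstacle.} The delicate step is matching the security definitions. We must check that the search-security game hands both parties the same key in exactly the format we embed into $\rho_x$. We must also check that conditioning $x$ on $\langle r,x\rangle=m$ with uniform $m$ reproduces the uniform prior on $x$, independent of $r$. This independence is what the theorem requires: the theorem is stated only under the uniform prior, with $r$ fresh and shared. If the key is revealed only after cloning, so that it is not part of $\rho_x$, the same argument still applies. In that case the prediction POVMs are indexed by $(k,r)$, and we can absorb $k$ by averaging the theorem's bound over $k$ via Jensen's inequality, using the concavity of $t\mapsto t^{1/22}$.
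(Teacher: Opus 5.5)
Your reduction is sound and follows essentially the paper's route, up to one fixable detail. The shared ingredients are:
\begin{itemize}
\item a Goldreich--Levin parity compiler whose challenge $r$ reaches both parties only after splitting (here, inside the key);
\item the ensemble $\rho_x$ formed by the cloned ciphertext together with classical copies of $k$ on both sides;
\item the observation that information-theoretic search security bounds $p_{\srch}(\boldsymbol\rho)$, because the theorem's witness POVMs need not be efficient;
\item a direct application of the bound $\frac12+5p_{\srch}^{1/22}$.
\end{itemize}
Working directly in the unclonable-encryption game this way also spares you the detour through the single-decryptor equivalence of~\cite{GZ20} that the paper cites.

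The detail that needs repair is in the compiler. Instead of the paper's masked-parity compiler, which keeps $x$ uniform and appends classical data masking $m$ by $\langle r,x\rangle$, you condition $x$ on $\langle r,x\rangle=m$. That is undefined, and breaks correctness, when $r=0$ and $m=1$, since no $x$ satisfies $\langle 0,x\rangle=1$. There are two easy fixes:
\begin{itemize}
\item \emph{Use the masked form.} The mask bit is then uniform and independent of $(k,x,r)$. For each fixed value of it the experiment is again the theorem's game, and the cloner with that value hardwired is a legitimate search adversary.
\item \emph{Draw $r$ from nonzero vectors.} Extend both parties' strategies to $r=0$ by answering $0$, which is always correct. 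Writing $p'$ for the success probability with $r\neq0$, this gives $p_{\pred}=2^{-n}+(1-2^{-n})p'$. Since $p_{\pred}\le 1$, it follows that $p'\le p_{\pred}$, so the theorem's bound still applies.
\end{itemize}
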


This compiler appears in Ananth, Kaleoglu, and
Liu~\cite[Section~5.4.1, Corollaries~4--5]{AKL23} for one-time
single-decryptor encryption with independent challenges; the present theorem
supplies the analysis for a common challenge.  Georgiou and
Zhandry~\cite[Theorem~2]{GZ20} equate unclonable encryption with the
corresponding selectively secure, honestly-generated-key, secret-key
single-decryptor setting, where splitting occurs before the same challenge
ciphertext is given to both parties.  Related QROM arguments transforming
search security into indistinguishability were given in~\cite{AKLLZ22,AKL23}.
At a high level, both our proof and~\cite{AKLLZ22}
spectrally partition the bipartite state according to the parties' success
operators and analyze the resulting regions, although the technical arguments
are different.

The masked-parity compiler leaves the quantum part of the underlying
ciphertext unchanged and appends only classical masking data.  Consequently,
the BB84 construction of~\cite{BL20} and the coset-state construction obtained
by combining~\cite{AK21} with~\cite{CLLZ21} retain their quantum-state
structure.  The BB84
instantiation gives an alternative efficient plain-model construction for
one-bit messages with information-theoretic security, complementary to the
recent constructions of Ananth and Sahai~\cite{AS26},
Ragavan~\cite{Rag26}, and Bhattacharyya, Broadbent, and
Culf~\cite{BBC26}.  The present route has a weaker quantitative bound.  Because its extraction
POVMs need not be efficient, it transfers information-theoretic search
security, not merely computational search security.

\begin{mdframed}[
 nobreak=true,
 linewidth=0.7pt,
 linecolor=black!55,
 backgroundcolor=black!2,
 roundcorner=3pt,
 skipabove=1em,
 skipbelow=1em,
 innerleftmargin=10pt,
 innerrightmargin=10pt,
 innertopmargin=8pt,
 innerbottommargin=8pt
]
\noindent\textbf{AI usage declaration.}
The proof and the exposition were developed and revised with assistance from
ChatGPT using GPT-5.6 Sol Pro
and Codex at ultra reasoning effort.  AI-generated suggestions were treated as
candidate arguments rather than established results.  The author takes full
responsibility for the contents of the paper and for any remaining errors.
\end{mdframed}

\paragraph{Open problems.}
\begin{itemize}
\item \textbf{Efficient extraction.}
The present proof establishes the information-theoretic existence of local
extraction POVMs, but it does not give a uniform polynomial-time procedure for
constructing or implementing them from the prediction measurements.  Such an
efficient extraction procedure is necessary for a polynomial-time reduction
from computational search security.  During the development of this work,
ChatGPT using GPT-5.6 Sol Pro suggested a candidate efficient
reduction for the weaker target
\[
 p_{\srch}=\operatorname{negl}(n)
 \quad\Longrightarrow\quad
 p_{\pred}\le 0.6
\]
for all sufficiently large $n$.  This candidate argument has not been fully
verified and is not claimed as a result of this paper.

\item \textbf{Extension to larger fields.}
It remains open to prove an analogue of the present theorem over finite
fields larger than $\mathbb F_2$.  ChatGPT using GPT-5.6 Sol Pro
suggested candidate arguments in certain restricted regimes, including when
the shared states have additional prescribed structure and when the
prediction advantage is assumed to be non-negligible.  These arguments have
not been fully verified, and a theorem covering arbitrary state ensembles
and the full negligible-advantage regime remains open.
\end{itemize}

\paragraph{Acknowledgements.} The author is grateful to Yao-Ting Lin for many useful discussions.

\subsection{Technical overview}

\paragraph{Step 0 [Section~\ref{sec:nlsd-main} and Section~\ref{sec:nlsd-decompose}]: Rewriting prediction as an operator inequality.}
We first translate the prediction experiment into the language used by the
proof; the operator setup is given in
~\eqref{eq:td-difference-observables}--\eqref{eq:td-parameters}.
A binary POVM is represented by its difference observable, so Bob's and
Charlie's measurements on challenge $r$ become Hermitian contractions
$\Delta_{\bob}^r$ and $\Delta_{\charlie}^r$.  Taking their binary Fourier
coefficients at the hidden string $x$ gives the one-party operators
$\bfX^x$ and $\bfY^x$, while averaging the product of the two observables
on a common challenge gives
$\bfZ=\expect_r[\Delta_{\bob}^r\otimes\Delta_{\charlie}^r]$.

The central operator is the effect for the event that \emph{both} parties
answer correctly:
\[
 \begin{aligned}
 \bfG_x
 &:=\expect_r\!\left[
   \predpovm_{\bob}^{r,\langle r,x\rangle}
   \otimes
   \predpovm_{\charlie}^{r,\langle r,x\rangle}
 \right] \\
 &=\frac14\left(
   I_{\regB\regC}
   +\bfX^x\otimes I_{\regC}
   +I_{\regB}\otimes\bfY^x
   +\bfZ
 \right).
 \end{aligned}
\]
Consequently, the prediction probability is the average of
$\bra{\psi^x}\bfG_x\ket{\psi^x}$.  

\paragraph{Step 1 [Section~\ref{sec:nlsd-decompose}]: Decomposing the state into four spectral sectors.}
For each $x$, we split Bob's space according to whether the absolute
eigenvalue of $\bfX^x$ is at least $\tau$.  Let $\Pi_x$ be the corresponding
high projector and $N_x=I-\Pi_x$ the low projector.  Similarly,
$\Theta_x$ and $O_x=I-\Theta_x$ are Charlie's high and low projectors for
$\bfY^x$.  Equations~\eqref{eq:td-HH-HL-projections}
and~\eqref{eq:td-LH-LL-projections} form the four product projectors
\[
 \begin{array}{ll}
 P_{\HH}^x=\Pi_x\otimes\Theta_x\otimes I_{\regE},
 &P_{\HL}^x=\Pi_x\otimes O_x\otimes I_{\regE},\\[0.25em]
 P_{\LH}^x=N_x\otimes\Theta_x\otimes I_{\regE},
 &P_{\LL}^x=N_x\otimes O_x\otimes I_{\regE}.
 \end{array}
\]
The first letter records Bob's sector and the second Charlie's.  These
orthogonal projectors sum to the identity, so
$\ket{\psi^x}=\ket{\psi_{\HH}^x}+\ket{\psi_{\HL}^x}
+\ket{\psi_{\LH}^x}+\ket{\psi_{\LL}^x}$.

Substituting this decomposition into the prediction probability produces
four diagonal terms and six cross terms:
\[
 \begin{aligned}
 p_{\pred}
 ={}&
 \sum_{U\in\{\HH,\HL,\LH,\LL\}}
 \expect_x\bra{\psi_U^x}\bfG_x\ket{\psi_U^x}\\
 &+2\operatorname{Re}\expect_x\Bigl[
 \bra{\psi_{\HH}^x}\bfG_x\ket{\psi_{\HL}^x}
 +\bra{\psi_{\HH}^x}\bfG_x\ket{\psi_{\LH}^x}
 +\bra{\psi_{\HH}^x}\bfG_x\ket{\psi_{\LL}^x}\\
 &\hspace{8.5em}
 +\bra{\psi_{\HL}^x}\bfG_x\ket{\psi_{\LH}^x}
 +\bra{\psi_{\HL}^x}\bfG_x\ket{\psi_{\LL}^x}
 +\bra{\psi_{\LH}^x}\bfG_x\ket{\psi_{\LL}^x}
 \Bigr].
 \end{aligned}
\]

\paragraph{Step 2 [Sections~\ref{sec:nlsd-omit}--\ref{sec:nlsd-retained}]: Removing the easy and negligible sectors.}
The $\HH$ sector is high on both sides.  Parseval's identity turns the high
projectors into normalized local extraction measurements, so search security
directly bounds its average weight by $p_{\srch}/\tau^4$.  We therefore remove the entire $\HH$ family.  Concretely, for every hidden
string $x$, we omit the vector $\ket{\psi_{\HH}^x}$ from the subsequent
algebra and continue with the sum of the other three components.  We are not
claiming that $\ket{\psi_{\HH}^x}$ is zero; instead, its diagonal
contribution and every cross term containing it are charged to a separate
error estimate.  Among $\HL,\LH,\LL$, we also remove any entire family
whose \emph{average} weight is smaller than $\delta$.  These removals are
proof devices only: no physical operation is performed on the state.

The total omitted squared norm is at most
$p_{\srch}/\tau^4+3\delta$.  A Cauchy--Schwarz estimate then bounds the
change in prediction probability caused by all deletions, including their
cross terms.  This is why the low-weight test is performed before the more
delicate block analysis: it preserves the quantitative smallness of an
actually tiny sector instead of replacing it by a weaker worst-case bound.
After deletion, write the retained vector as
$\ket{\phi^x}=\ket{\phi_{\HL}^x}
+\ket{\phi_{\LH}^x}+\ket{\phi_{\LL}^x}$.
If either the retained $\HL$ or retained $\LH$ family is absent, then one
party is uniformly in a low sector and a marginal bound already proves the
claim.  The only hard case is therefore the one in which both complementary
one-high sectors remain and each has weight at least $\delta$.

\paragraph{Step 3 [Section~\ref{sec:nlsd-retained}]: Eliminating the $\LL$ cross terms.}
Claim~\ref{clm:td-retained-reduction} studies the slack operator
\[
 S_x:=\left(\frac12+\eta\right)I-\bfG_x.
\]
Because Charlie is low throughout $\HL\oplus\LL$, this operator has a
positive margin on that combined subspace; symmetrically, it has the same
margin on $\LH\oplus\LL$ because Bob is low there.  The two individual
cross terms $\HL$--$\LL$ and $\LH$--$\LL$ need not be small and may have
unfavorable signs, so bounding their absolute values separately would throw
away precisely this positivity.  Moreover, each such term is only a
\emph{half-extraction path}: the first has a useful high endpoint only on
Bob's side, and the second only on Charlie's side.  Search security, which
concerns joint recovery, does not by itself control either half.

The proof instead groups the $\LL$ diagonal term with \emph{both} of its
cross terms and completes the square in the $\LL$ vector.  Equivalently, it
takes the Schur complement of the positive $\LL$ block.  This does not
discard the $\LL$ interactions; it accounts exactly for their most
unfavorable possible effect.  The square completion leaves positive diagonal
budgets for $\HL$ and $\LH$ and combines the remaining interaction into one
effective scalar $c_x$.  This scalar consists of the direct path
$\HL\to\LH$ and a mediated path
$\HL\to\LL\to\LH$.  Quantitatively, Claim~\ref{clm:td-retained-reduction}
gives
\[
 \expect_x\bra{\phi^x}S_x\ket{\phi^x}
 \ge
 \left(\eta-\frac{\eta^3}{2}\right)
 (W_{\HL}+W_{\LH})
 -2\left|\expect_x c_x\right|.
\]
The entire retained-state analysis has therefore been reduced to bounding a
single averaged cross scalar.

\paragraph{Step 4 [Section~\ref{sec:nlsd-expand}]: Expanding the effective cross scalar.}
The desired estimate is
\[
 \left|\expect_xc_x\right|
 \le
 \left(\eta-\frac{\eta^3}{2}\right)
 \sqrt{W_{\HL}W_{\LH}}.
\]
If it holds, the arithmetic--geometric mean inequality absorbs the cross term
into the two positive diagonal budgets in Step~3, proving nonnegativity of the
retained contribution.

Claim~\ref{clm:td-geometric-expansion} converts $c_x$ into quantities that
have an operational connection to search.  The useful observation is that
whenever $\bfG_x$ moves between distinct high/low sectors, its identity and
one-party Fourier terms do not contribute; on such off-diagonal blocks,
$\bfG_x$ is exactly $\bfZ/4$.  Inside the $\LL$ inverse, the one-party
terms do not vanish, but low-sector spectral bounds make them a small
perturbation.  An inverse-difference estimate removes that perturbation with
controlled error.  The remaining inverse depends only on the compressed
operator $P_{\LL}^x\bfZ P_{\LL}^x$ and is expanded as a convergent Neumann
series.  The result can be written as the exact equality
\[
 \expect_xc_x
 =
 \frac14\left[
 \Gamma_1+
 \sum_{k\ge0}(1+4\eta)^{-(k+1)}\Gamma_{k+2}
 \right]
 +\varepsilon_{\mathrm{rem}},
 \qquad
 |\varepsilon_{\mathrm{rem}}|
 \le
 \frac{\eta}{16}\sqrt{W_{\HL}W_{\LH}}.
\]  Here $\Gamma_1$ is the direct
$\HL\to\LH$ transition through one copy of $\bfZ$, while $\Gamma_m$ for
$m\ge2$ is a path that enters $\LL$, makes $m-2$ transitions inside
$\LL$, and exits into $\LH$.  The Neumann expansion is useful precisely
because it turns a complicated inverse into a weighted sum of these
finite-length common-challenge paths.

\paragraph{Step 5 [Sections~\ref{sec:nlsd-witness}--\ref{sec:nlsd-complete}]: Reducing every $\Gamma_m$ to search security.}
Lemma~\ref{lem:td-gamma-search} proves
\[
 |\Gamma_m|
 \le
 \frac{(m+1)\sqrt{p_{\srch}}}{\tau^2}
 \qquad(m\ge1).
\]
To see the mechanism, expand every copy of
$\bfZ=\expect_r[\Delta_{\bob}^r\otimes\Delta_{\charlie}^r]$.
A term of length $m$ is then indexed by an auxiliary word
$w=(r_1,\ldots,r_m)$ of independent challenges and factors into a Bob
operator and a Charlie operator.  In the reduction, $w$ is fixed (or
hardwired) independently of the hidden string; it is not an additional public
input in the original prediction experiment.  The high endpoint $\Pi_x$ on Bob's side
and the high endpoint $\Theta_x$ on Charlie's side are what make the two
families normalizable as local sub-POVMs over candidate strings $x$.
Parseval controls these endpoints; a telescoping argument on Bob's side
accounts for the factor $m+1$.  For each fixed $w$, the resulting measurement is fixed
independently of the hidden string, although its effects are indexed by the
candidate output $z\in\ftwo^n$ as every search POVM must be.  Therefore the
definition of $p_{\srch}$ applies.  Search security and Cauchy--Schwarz then give the
displayed bound.

Finally, substituting these estimates into the geometrically weighted series
controls $\expect_xc_x$.  Claim~\ref{clm:td-retained-reduction} absorbs that
cross term, and Step~2's deletion estimate restores the omitted sectors.
The chosen relationships among $\eta$, $\tau$, and $\delta$ leave the
required slack and complete the proof.

Figure~\ref{fig:nlsd-proof-flow} summarizes the logical flow of the proof and
indicates where each step is carried out in the technical section.

\begin{figure}[H]
\centering
\begin{tikzpicture}[
  node distance=4.5mm and 4.5mm,
  proofstep/.style={
    nlsdstate,
    text width=.28\textwidth,
    minimum height=2.45cm,
    inner xsep=5pt,
    inner ysep=4pt,
    font=\small
  },
  flowlabel/.style={
    fill=white,
    inner sep=1pt,
    font=\scriptsize\itshape,
    text=NLSDGray
  },
  sectiontag/.style={
    font=\scriptsize,
    text=NLSDGray
  }
]
\node[proofstep] (step0) {
  {\bfseries\color{NLSDBlue}Step 0}\\[-2pt]
  {\scriptsize\color{NLSDGray}[Sections~\ref{sec:nlsd-main}, \ref{sec:nlsd-decompose}]}\\[-1pt]
  \emph{Operator form}\\
  Rewrite prediction using $\bfG_x$ and the slack operator.
};
\node[proofstep,right=of step0] (step1) {
  {\bfseries\color{NLSDBlue}Step 1}\\[-2pt]
  {\scriptsize\color{NLSDGray}[Section~\ref{sec:nlsd-decompose}]}\\[-1pt]
  \emph{Four spectral sectors}\\
  Split into $\HH,\HL,\LH,\LL$ and expose diagonal and cross terms.
};
\node[proofstep,right=of step1] (step2) {
  {\bfseries\color{NLSDBlue}Step 2}\\[-2pt]
  {\scriptsize\color{NLSDGray}[Sections~\ref{sec:nlsd-omit}--\ref{sec:nlsd-retained}]}\\[-1pt]
  \emph{Prune and split cases}\\
  Remove $\HH$ and globally low-weight sectors.
};
\node[proofstep,below=of step2] (step3) {
  {\bfseries\color{NLSDBlue}Step 3}\\[-2pt]
  {\scriptsize\color{NLSDGray}[Section~\ref{sec:nlsd-retained}]}\\[-1pt]
  \emph{Eliminate $\LL$}\\
  Complete the square; the surviving interaction becomes $c_x$.
};
\node[proofstep,left=of step3] (step4) {
  {\bfseries\color{NLSDBlue}Step 4}\\[-2pt]
  {\scriptsize\color{NLSDGray}[Section~\ref{sec:nlsd-expand}]}\\[-1pt]
  \emph{Expand $c_x$}\\
  Use inverse-difference and Neumann expansions to obtain the $\Gamma_m$ paths.
};
\node[proofstep,left=of step4] (step5) {
  {\bfseries\color{NLSDBlue}Step 5}\\[-2pt]
  {\scriptsize\color{NLSDGray}[Sections~\ref{sec:nlsd-witness}--\ref{sec:nlsd-complete}]}\\[-1pt]
  \emph{Invoke search security}\\
  Bound every $\Gamma_m$, sum the series, and conclude.
};

\draw[nlsdarrow] (step0)--(step1);
\draw[nlsdarrow] (step1)--(step2);
\draw[nlsdarrow] (step2)--node[flowlabel,right]{hard case}(step3);
\draw[nlsdarrow] (step3)--(step4);
\draw[nlsdarrow] (step4)--(step5);
\end{tikzpicture}
\caption{Flow of the proof.  Steps~0--2 reformulate the prediction experiment
and isolate the only hard case.  Steps~3--5 use positivity of the $\LL$ block
to turn the surviving interaction into common-challenge paths bounded by
joint search security.}
\label{fig:nlsd-proof-flow}
\end{figure}

\section{Preliminaries}

We collect the finite-dimensional quantum information and Fourier notation
used in the proof.  Our conventions are standard; see, for example,
Watrous~\cite{Wat18}.

\subsection{Basic notation}

For a positive integer $n$, write $[n]:=\{1,\ldots,n\}$.  If $S$ is a
finite set, then $s\xleftarrow{\$}S$ denotes a uniformly random element of
$S$.  An expectation whose subscript consists only of random variables,
such as $\expect_x$ or $\expect_{x,r}$, is taken with respect to the uniform
and mutually independent choices of those variables unless another
distribution is stated explicitly.  A function
$\mu:\mathbb N\to\mathbb R_{\ge0}$ is negligible if, for every positive
polynomial $p$, there is an $N_p$ such that $\mu(n)<1/p(n)$ for every
$n\ge N_p$.

For a linear operator $M$, its operator norm is
\begin{equation}
 \norm{M}_\infty
 :=\sup_{\norm{\ket{v}}=1}\norm{M\ket{v}}.
 \label{eq:prelim-operator-norm}
\end{equation}
It is submultiplicative, and
$|\bra{u}M\ket{v}|\le\norm{M}_\infty
\norm{\ket{u}}\norm{\ket{v}}$.  For a Hermitian operator, the operator norm
is the largest absolute value of an eigenvalue.  We write
$\operatorname{ran}(M)$ for its range.  We use the vector
Cauchy--Schwarz inequality and its averaged form
\[
 \left|\expect_x\langle u_x|v_x\rangle\right|
 \le
 \left(\expect_x\norm{\ket{u_x}}^2\right)^{1/2}
 \left(\expect_x\norm{\ket{v_x}}^2\right)^{1/2}
\]
without further comment.

\subsection{Finite-dimensional quantum information}

A register $\regR$ is associated with a finite-dimensional Hilbert space,
also denoted $\regR$.  We write $\linear(\regR)$ for the linear operators on
$\regR$ and $\density(\regR)$ for the density operators.  The identity on
$\regR$ is $I_{\regR}$; identities on registers unaffected by an operator
are often suppressed.  An operator $M$ is positive semidefinite, written
$M\succeq0$, if $\bra{v}M\ket{v}\ge0$ for every vector $\ket{v}$.  For
Hermitian $M,N$, the Loewner order $M\preceq N$ means $N-M\succeq0$.

A state $\rho\in\density(\regR)$ satisfies $\rho\succeq0$ and
$\Tr(\rho)=1$.  A unit vector $\ket{\psi}$ represents the pure state
$\ketbra{\psi}{\psi}$.  For a bipartite state $\rho_{\regR\regS}$, its
reduced state on $\regR$ is
$\rho_{\regR}=\Tr_{\regS}(\rho_{\regR\regS})$.  A purification of
$\rho_{\regR}$ is a unit vector
$\ket{\psi}_{\regR\regE}$ satisfying
$\Tr_{\regE}(\ketbra{\psi}{\psi})=\rho_{\regR}$.

A quantum channel
$\Phi:\linear(\regR)\to\linear(\regS)$ is a completely positive,
trace-preserving linear map.  A measurement with outcome set $\mathcal X$ is
a POVM $\{M^z\}_{z\in\mathcal X}$ with $M^z\succeq0$ and
$\sum_zM^z=I$.  On state $\rho$, the Born rule assigns outcome $z$
probability $\Tr(M^z\rho)$.  A sub-POVM only requires
$\sum_zM^z\preceq I$; it can be completed to a POVM by assigning the
positive residual effect $I-\sum_zM^z$ to any outcome.  We also use cyclicity
of the trace, $\Tr(AB)=\Tr(BA)$ whenever the products are defined.

A binary POVM $\{M^0,M^1\}$ is equivalently described by the Hermitian
contraction $D:=M^0-M^1$, for which $-I\preceq D\preceq I$ and
\begin{equation}
 M^b=\frac{I+(-1)^bD}{2},
 \qquad b\in\{0,1\}.
 \label{eq:prelim-binary-observable}
\end{equation}
A projection $P$ satisfies $P=P^\dagger=P^2$.  When an inverse of a
compressed operator $PMP$ appears, it is always taken on
$\operatorname{ran}(P)$ and, when convenient, extended by zero on the
orthogonal complement.  In particular, if
$PMP\succeq aP$ for $a>0$, then this restricted inverse exists and has
operator norm at most $1/a$ (equivalently,
$(PMP)^{-1}\preceq a^{-1}P$ on $\operatorname{ran}(P)$).

\subsection{Binary vectors and characters}

We identify $\{0,1\}^n$ with the vector space $\ftwo^n$.  Addition and the
inner product
\begin{equation}
 \langle r,x\rangle:=\sum_{j=1}^nr_jx_j\pmod 2
 \label{eq:prelim-f2-inner-product}
\end{equation}
are over $\ftwo$.  For $x,r\in\ftwo^n$, define the character
\begin{equation}
 \chi_x(r):=(-1)^{\langle r,x\rangle}.
 \label{eq:prelim-character}
\end{equation}
The identities
\begin{equation}
 \chi_x(r+r')=\chi_x(r)\chi_x(r'),
 \qquad
 \sum_{x\in\ftwo^n}\chi_x(s)
 =\begin{cases}2^n,&s=0,\\0,&s\ne0,\end{cases}
 \label{eq:prelim-character-orthogonality}
\end{equation}
are the character orthogonality relations used in the Parseval calculation.

\section{Non-Local Search-to-Decision over \texorpdfstring{$\ftwo$}{F2}}
\label{sec:nlsd-main}

\noindent We prove the main theorem. 

\newcommand{\E}{\expect}
\newcommand{\F}{\mathbb F}
\newcommand{\ran}{\operatorname{ran}}
\newcommand{\id}{I}
\newcommand{\idB}{I_{\regB}}
\newcommand{\idC}{I_{\regC}}
\newcommand{\idBC}{I_{\regB\regC}}

\begin{theorem}
\label{thm:nl-s2d-f2}
Let $n$ be a positive integer, and let
\[
 \boldsymbol{\rho}:=(\rho_x)_{x\in\ftwo^n},
 \qquad
 \rho_x\in\density(\regB\otimes\regC),
\]
be a bipartite ensemble indexed by a uniformly distributed
$x\in\ftwo^n$.  Define its optimal joint local recovery probability by
\begin{equation}
 p_{\srch}(\boldsymbol{\rho})
 :=
 \sup_{\Lambda_{\bob}^{\srch},\Lambda_{\charlie}^{\srch}}
 \expect_x\left[
 \Tr\left(
  (\extractor_{\bob}^x\otimes\extractor_{\charlie}^x)\rho_x
 \right)
 \right].
 \label{eqn:thm:ftwo:psrch}
\end{equation}
Here the supremum is over local full-string POVMs
$\Lambda_{\bob}^{\srch}=\{\extractor_{\bob}^z\}_{z\in\ftwo^n}$ on
$\regB$ and
$\Lambda_{\charlie}^{\srch}=\{\extractor_{\charlie}^z\}_{z\in\ftwo^n}$
on $\regC$.  The parties can always output the same fixed string, and hence
\[
 2^{-n}\le p_{\srch}(\boldsymbol{\rho})\le1.
\]

For every collection of binary-outcome POVMs
\[
 \Lambda_{\bob}^{\pred}[r]
 =\{\predpovm_{\bob}^{r,0},\predpovm_{\bob}^{r,1}\},
 \qquad
 \Lambda_{\charlie}^{\pred}[r]
 =\{\predpovm_{\charlie}^{r,0},\predpovm_{\charlie}^{r,1}\},
\]
one for each $r\in\ftwo^n$, define
\begin{equation}
 p_{\pred}
 :=\expect_{r,x}\left[
 \Tr\left(
 (\predpovm_{\bob}^{r,\langle r,x\rangle}\otimes
  \predpovm_{\charlie}^{r,\langle r,x\rangle})\rho_x
 \right)
 \right],
 \label{eqn:thm:ftwo:ppred}
\end{equation}
where $x$ and $r$ are sampled independently and uniformly from $\ftwo^n$.
Then
\begin{equation}
 p_{\pred}\le
 \min\left\{1,\frac12+
 5p_{\srch}(\boldsymbol{\rho})^{1/22}\right\}.
 \label{eqn:thm:ftwo:main-bound}
\end{equation}

Consequently, let
$\{\boldsymbol{\rho}^{(n)}
=(\rho_x^{(n)})_{x\in\ftwo^n}\}_{n\in\mathbb N}$ be any family of such
ensembles, where the local register dimensions may depend on $n$.  If
$p_{\srch}(\boldsymbol{\rho}^{(n)})$ is negligible in $n$, then, for every
family of local prediction POVMs, there is a negligible function $\nu$ such
that
\[
 p_{\pred}^{(n)}\le\frac12+\nu(n).
\]
\end{theorem}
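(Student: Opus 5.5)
We follow the route laid out in the technical overview. First purify each $\rho_x$ to a unit vector $\ket{\psi^x}$ on $\regB\regC\regE$. Write each binary POVM through its difference observable $\Delta_{\bob}^r,\Delta_{\charlie}^r$ as in~\eqref{eq:prelim-binary-observable}, and set
\[
 \bfX^x=\expect_r[\chi_x(r)\Delta_{\bob}^r],\qquad
 \bfY^x=\expect_r[\chi_x(r)\Delta_{\charlie}^r],\qquad
 \bfZ=\expect_r[\Delta_{\bob}^r\otimes\Delta_{\charlie}^r].
\]
Then $p_{\pred}=\expect_x\bra{\psi^x}\bfG_x\ket{\psi^x}$, where $\bfG_x=\frac14(I+\bfX^x\otimes I+I\otimes\bfY^x+\bfZ)$.

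The basic tool is Parseval. Since $\Delta_{\bob}^r$ is a contraction, $\sum_x(\bfX^x)^2\preceq I$. Hence, for the spectral projector $\Pi_x$ onto $|\bfX^x|\ge\tau$, the family $\{\tau^2\Pi_x\}_x$ is dominated by $\{(\bfX^x)^2\}_x$ and is therefore a sub-POVM; the same holds on Charlie's side with $\Theta_x$.

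The trivial bound $p_{\pred}\le1$ is immediate. For the main bound we may assume $p_{\srch}$ is small, since otherwise $5p_{\srch}^{1/22}\ge\frac12$ and there is nothing to prove.

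\textbf{Sector decomposition and pruning.} Decompose $\ket{\psi^x}$ into the four components $\HH,\HL,\LH,\LL$ using $\Pi_x,N_x,\Theta_x,O_x$.

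The $\HH$ weight $\expect_x\norm{\ket{\psi^x_{\HH}}}^2$ is at most $p_{\srch}/\tau^4$, because $\{\tau^2\Pi_x\}\otimes\{\tau^2\Theta_x\}$ completes to a pair of local search POVMs. Also discard any of the families $\HL,\LH,\LL$ whose average weight is below $\delta$.

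Since $\norm{\bfG_x}_\infty\le1$, Cauchy--Schwarz bounds the change in $p_{\pred}$ from all deletions by $O\bigl(\sqrt{p_{\srch}/\tau^4+3\delta}\bigr)$. Call the retained vector $\ket{\phi^x}$.

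If the retained $\HL$ or $\LH$ family is missing, one party is always in its low sector. In that case use the marginal bound: for instance, $\bfG_x\preceq\frac12(I+I\otimes\bfY^x)$, and on Charlie's low sector this is at most $\frac12+\frac{\tau}{2}$. That settles this case.

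\textbf{Hard case.} Take the slack operator $S_x=(\frac12+\eta)I-\bfG_x$ with $\eta\approx\tau$. It is positive with margin of order $\eta$ on $\HL\oplus\LL$, because Charlie is low there, and likewise on $\LH\oplus\LL$, because Bob is low there.

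Complete the square in the $\LL$ component, i.e.\ take the Schur complement of the $\LL$ block. This leaves diagonal budgets $\approx\eta(W_{\HL}+W_{\LH})$ minus $2|\expect_xc_x|$. Here $c_x$ is the direct $\HL\to\LH$ matrix element plus the mediated path through $(P_{\LL}S_xP_{\LL})^{-1}$.

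On off-diagonal blocks $\bfG_x=\bfZ/4$, because the identity and the one-party terms are block diagonal. Inside the $\LL$ inverse, the one-party terms have norm at most $\tau$ and can be removed by a resolvent-difference estimate. What remains is expanded as a Neumann series in $P_{\LL}\bfZ P_{\LL}$, giving a weighted sum of path terms $\Gamma_m$.

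\textbf{Key lemma (main obstacle).} We need
\[
 |\Gamma_m|\le\frac{(m+1)\sqrt{p_{\srch}}}{\tau^2}.
\]
Expand each copy of $\bfZ$ over an independent challenge word $w=(r_1,\dots,r_m)$. For fixed $w$, the term factors as $\bra{\psi^x}(\Pi_x A_w N_x\cdots)\otimes(\cdots O_x B_w\Theta_x)\ket{\psi^x}$. The interior projectors $N_x,O_x$ depend on $x$ and obstruct forming a search POVM.

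We plan to handle Bob's side by telescoping $N_x=I-\Pi_x$ at each of the $m$ interior positions, which is the source of the factor $m+1$. Each resulting term has $x$-dependence only through high endpoints, and Parseval normalizes these into sub-POVMs $\{\tau^2\Pi_x(\cdot)\}$ and $\{\tau^2\Theta_x(\cdot)\}$ that are fixed for the hardwired $w$. Cauchy--Schwarz against the joint success probability, which is at most $p_{\srch}$, then gives $\sqrt{p_{\srch}}/\tau^2$ per term.

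Making the telescoping and the Charlie-side interior projectors both compatible with a genuine local POVM is where we expect the difficulty.

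\textbf{Conclusion.} Sum the series with weights $(1+4\eta)^{-(k+1)}$. The factor $(m+1)$ gives a total of $O(\sqrt{p_{\srch}}/(\tau^2\eta^2))$. This must be at most $\eta\sqrt{W_{\HL}W_{\LH}}$, which is at least $\eta\delta$; then AM--GM absorbs the cross term and shows the retained contribution is at most $\frac12+\eta$.

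Finally balance the parameters $\eta\sim\tau$, $\delta$, and $p_{\srch}/\tau^4$ against each other. Choosing powers of $p_{\srch}$ that yield exponent $1/22$ with constant $5$ should give~\eqref{eqn:thm:ftwo:main-bound}. The negligibility consequence is immediate from this bound.
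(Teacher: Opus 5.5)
Your outline reproduces the paper's proof step for step: pruning of $\HH$ and light families, the Schur complement in the $\LL$ block, $\bfG_x=\bfZ/4$ on off-diagonal blocks, resolvent removal of the one-party terms, a Neumann series in $P_{\LL}^x\bfZ P_{\LL}^x$, and extractors from fixed challenge words. However, the parameter regime you state twice, $\eta\approx\tau$, makes the resolvent-difference step fail. Write $4M_x=A_x-R_x$ with $A_x=(1+4\eta)P_{\LL}^x-P_{\LL}^x\bfZ P_{\LL}^x\succeq4\eta P_{\LL}^x$ and $\norm{R_x}_\infty\le2\tau$. Then the available estimate is only
\[
 \norm{M_x^{-1}-4A_x^{-1}}_\infty\le\frac{8\tau}{4\eta(4\eta-2\tau)}.
\]
So replacing $M_x^{-1}$ by $4A_x^{-1}$ moves $c_x$ by up to $\frac{\tau}{8\eta(4\eta-2\tau)}\norm{\ket{\phi_{\HL}^x}}\norm{\ket{\phi_{\LH}^x}}$. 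For this to fit inside the budget $(\eta-\tau/2)\sqrt{W_{\HL}W_{\LH}}$ you need $\tau=O(\eta^3)$. With $\tau\approx\eta$ the error is of order $1/\eta$, comparable to the trivial bound $\frac{1}{16}\norm{M_x^{-1}}_\infty\le\frac{1}{16(\eta-\tau/2)}$ on the entire mediated term, so the resolvent step gains nothing. The paper takes $\tau=\eta^3$ and $\delta=\eta^2/3072$ (so that $\sqrt q=O(\eta)$), and this choice is exactly what produces the exponent. The $\Gamma$-series contributes $O(\sqrt{p_{\srch}}/(\tau^2\eta^2))=O(\sqrt{p_{\srch}}/\eta^8)$. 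This must be at most $O(\eta)\sqrt{W_{\HL}W_{\LH}}$ with $\sqrt{W_{\HL}W_{\LH}}\ge\delta=\Theta(\eta^2)$, i.e.\ $p_{\srch}=O(\eta^{22})$. Under your $\eta\sim\tau$ the same bookkeeping would give $\eta^{14}$, so the claimed $1/22$ does not follow from your parameters either.

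You also leave the key lemma unresolved, although it closes along the lines you sketch. On Charlie's side no telescoping is needed. In $O_x\Delta_{\charlie}^{r_m}O_x\cdots\Delta_{\charlie}^{r_1}\Theta_x$ every $x$-dependent low projector sits to the left of $\Theta_x$, so the adjoint-product is $\preceq\Theta_x$ and Parseval bounds the sum over $x$ by $\tau^{-2}I$. On Bob's side, contrary to what you write, the telescoped terms still contain interior $N_x$'s. The $j$th term has the form $T_j=(\text{contractions, including }N_x\text{'s})\cdot\Pi_x\cdot B_j$, where $B_j=\Delta_{\bob}^{r_{j-1}}\cdots\Delta_{\bob}^{r_1}$ is independent of $x$; the remaining term is $T_0=\Pi_x\Delta_{\bob}^{r_m}\cdots\Delta_{\bob}^{r_1}$. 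What matters is that the only $x$-dependence to the right is a single $\Pi_x$. Hence $T_j^\dagger T_j\preceq B_j^\dagger\Pi_x B_j$, and each family $\{\tau^2T_j^\dagger T_j\}_x$ is a sub-POVM for fixed $w$. Each of the $m+1$ terms is then bounded by $\sqrt{p_{\srch}}/\tau^2$. The paper packages the same fact as $\norm{V_{x,w}^B\ket{\zeta}}\le\sum_j\norm{\Pi_x\ket{\zeta_j}}$ plus Minkowski. This gives one sub-POVM normalized by $\tau^2/(m+1)^2$ and the same bound $(m+1)\sqrt{p_{\srch}}/\tau^2$.
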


\begin{proof}
For brevity, write $p_{\srch}:=p_{\srch}(\boldsymbol{\rho})$.
Define
\begin{equation}
 \Delta_{\bob}^r:=\predpovm_{\bob}^{r,0}-\predpovm_{\bob}^{r,1},
 \qquad
 \Delta_{\charlie}^r:=\predpovm_{\charlie}^{r,0}-\predpovm_{\charlie}^{r,1}.
 \label{eq:td-difference-observables}
\end{equation}
These are Hermitian contractions, and
\[
 \predpovm_{\bob}^{r,b}=\frac{I_{\regB}+(-1)^b\Delta_{\bob}^r}{2},
 \qquad
 \predpovm_{\charlie}^{r,b}=\frac{I_{\regC}+(-1)^b\Delta_{\charlie}^r}{2}.
\]
Let
\[
 \chi_x(r):=(-1)^{\langle r,x\rangle},
\]
and define
\begin{equation}
 \bfX^x:=\expect_r[\chi_x(r)\Delta_{\bob}^r],
 \qquad
 \bfY^x:=\expect_r[\chi_x(r)\Delta_{\charlie}^r],
 \qquad
 \bfZ:=\expect_r[\Delta_{\bob}^r\otimes\Delta_{\charlie}^r].
 \label{eq:td-fourier-coefficients}
\end{equation}
For every $x$, define the effect corresponding to the event that both parties answer the parity correctly:
\begin{align}
 \bfG_x
 &:=\expect_r\!
 \left[
   \predpovm_{\bob}^{r,\langle r,x\rangle}
   \otimes
   \predpovm_{\charlie}^{r,\langle r,x\rangle}
 \right]
 \notag\\
 &=\frac14\left(
   I_{\regB\regC}
   +\bfX^x\otimes I_{\regC}
   +I_{\regB}\otimes\bfY^x
   +\bfZ
 \right).
 \label{eq:td-Gx}
\end{align}
Hence
\begin{equation}
 p_{\pred}=\expect_x[\Tr(\bfG_x\rho_x)].
 \label{eq:td-p-as-G}
\end{equation}
Since joint correctness implies Bob's correctness and also Charlie's correctness,
\begin{equation}
 0\preceq\bfG_x\preceq
 \frac{I_{\regB\regC}+\bfX^x\otimes I_{\regC}}2,
 \qquad
 0\preceq\bfG_x\preceq
 \frac{I_{\regB\regC}+I_{\regB}\otimes\bfY^x}2.
 \label{eq:td-marginal-bounds}
\end{equation}
In particular, $0\preceq\bfG_x\preceq I_{\regB\regC}$.

Fix
\begin{equation}
 0<\eta\le\frac14,
 \qquad
 \tau:=\eta^3,
 \qquad
 \delta:=\frac{\eta^2}{3072},
 \label{eq:td-parameters}
\end{equation}
and temporarily assume
\begin{equation}
 p_{\srch}\le\frac{\eta^{22}}{3072^2}.
 \label{eq:td-temporary-assumption}
\end{equation}
We will prove the conditional bound
\begin{equation}
 p_{\pred}<\frac12+1.064\eta.
 \label{eq:td-conditional-goal}
\end{equation}
The theorem will then follow by choosing $\eta$ as a function of $p_{\srch}$.

\subsection{Start from the desired expectation and decompose the state}
\label{sec:nlsd-decompose}

Choose a purification $\ket{\psi^x}_{\regB\regC\regE}$ of each $\rho_x$.
Every operator on $BC$ below is understood to act as the identity on $\regE$.
Equation~\eqref{eq:td-p-as-G} becomes
\begin{equation}
 p_{\pred}
 =\expect_x\!
 \left[
   \bra{\psi^x}\bfG_x\ket{\psi^x}
 \right].
 \label{eq:td-main-target}
\end{equation}
Our entire aim is to upper-bound the expectation on the right by $1/2$ plus a negligible quantity.

Consider spectral decompositions
\[
 \bfX^x=\sum_i\lambda_i\ketbra{u_i}{u_i},
 \qquad
 \bfY^x=\sum_i\mu_i\ketbra{v_i}{v_i}.
\]
Define Bob's high and low projections and Charlie's high and low projections by
\begin{equation}
 \Pi_x:=\sum_{i:\,|\lambda_i|\ge\tau}\ketbra{u_i}{u_i},
 \qquad N_x:=I_{\regB}-\Pi_x,
 \qquad
 \Theta_x:=\sum_{i:\,|\mu_i|\ge\tau}\ketbra{v_i}{v_i},
 \qquad O_x:=I_{\regC}-\Theta_x.
 \label{eq:td-high-low-projections}
\end{equation}
Thus $N_x\bfX^xN_x$ has operator norm at most $\tau$, while
$O_x\bfY^xO_x$ has operator norm at most $\tau$.

For each fixed $x$, define four mutually orthogonal projections
\begin{align}
 P_{\HH}^x&:=\Pi_x\otimes\Theta_x\otimes I_{\regE},
 &P_{\HL}^x&:=\Pi_x\otimes O_x\otimes I_{\regE},
 \label{eq:td-HH-HL-projections}\\
 P_{\LH}^x&:=N_x\otimes\Theta_x\otimes I_{\regE},
 &P_{\LL}^x&:=N_x\otimes O_x\otimes I_{\regE}.
 \label{eq:td-LH-LL-projections}
\end{align}
The first letter refers to Bob and the second to Charlie.  For example, $\HL$ means that Bob is in his high subspace and Charlie is in his low subspace.
These projections sum to the identity.  Define
\begin{equation}
 \ket{\psi_U^x}:=P_U^x\ket{\psi^x},
 \qquad U\in\{\HH,\HL,\LH,\LL\}.
 \label{eq:td-components}
\end{equation}
Then
\begin{equation}
 \ket{\psi^x}
 =\ket{\psi_{\HH}^x}
 +\ket{\psi_{\HL}^x}
 +\ket{\psi_{\LH}^x}
 +\ket{\psi_{\LL}^x},
 \label{eq:td-state-decomposition}
\end{equation}
and the four vectors on the right are mutually orthogonal.
For later use, put
\begin{equation}
 \operatorname{wt}(U):=\expect_x\norm{\ket{\psi_U^x}}^2.
 \label{eq:td-weights}
\end{equation}
Their weights sum to one.

We now open the target quantity in~\eqref{eq:td-main-target} using~\eqref{eq:td-state-decomposition}:
\begin{align}
 p_{\pred}
 &=\sum_{U\in\{\HH,\HL,\LH,\LL\}}
   \expect_x\bra{\psi_U^x}\bfG_x\ket{\psi_U^x}
 \notag\\
 &\quad
 +2\operatorname{Re}
 \sum_{U<V}
   \expect_x\bra{\psi_U^x}\bfG_x\ket{\psi_V^x}.
 \label{eq:td-diagonal-cross-expansion}
\end{align}
Here $U<V$ denotes any fixed ordering of the four labels.
The first line contains the four diagonal terms.  The second line contains the six cross terms.

This expansion reveals the proof strategy.

\begin{enumerate}[label=\textbf{\arabic*.},leftmargin=2.2em]
 \item The diagonal terms supported on $\HL$, $\LH$, and $\LL$ are easy: on each of these subspaces at least one party is low, so the marginal bounds in~\eqref{eq:td-marginal-bounds} put the value below $(1+\tau)/2$ times the squared norm.

 \item The $\HH$ diagonal term does not have this easy bound.  Instead, we will show that the average squared norm of $\ket{\psi_{\HH}^x}$ is small (by reducing to search security).  The same small-norm estimate will control all cross terms involving the $\HH$ component at once.

 \item After the $\HH$ component and any other component family of tiny average weight are removed from the estimate, the only potentially difficult interaction is among the $\HL$, $\LH$, and $\LL$ components.  The cross terms involving $\LL$ should not be estimated one by one: their signs are uncontrolled.  We instead study
 \[
   \left(\frac12+\eta\right)I-\bfG_x
 \]
 and minimize over the $\LL$ component.  This combines the $\HL$--$\LL$ and $\LH$--$\LL$ cross terms with the $\LL$ diagonal term and leaves one effective $\HL$--$\LH$ cross scalar $c_x$.

 \item Finally, we expand the average of $c_x$ into contributions associated with fixed sequences of common challenges.  Each such contribution yields a valid pair of local full-string extraction measurements and is therefore bounded by $p_{\srch}$.
\end{enumerate}

The rest of the proof implements these four points in this order.

\subsection{The high--high component and all terms involving omitted vectors}
\label{sec:nlsd-omit}

The first lemma supplies the measurement needed to control the $\HH$ component.

\begin{lemma}[Parseval and the average weight of the $\HH$ component]
We have
\begin{equation}
 \sum_x\Pi_x\preceq\tau^{-2}I_{\regB},
 \qquad
 \sum_x\Theta_x\preceq\tau^{-2}I_{\regC},
 \label{eq:td-threshold-sums}
\end{equation}
and consequently
\begin{equation}
 \operatorname{wt}(\HH)
 \le\frac{p_{\srch}}{\tau^4}.
 \label{eq:td-HH-weight}
\end{equation}
\end{lemma}

\begin{proof}
Character orthogonality gives
\begin{align}
 \sum_x(\bfX^x)^2
 &=\sum_x\expect_{r,r'}
   \chi_x(r+r')\Delta_{\bob}^r\Delta_{\bob}^{r'}
 \notag\\
 &=\expect_r(\Delta_{\bob}^r)^2
 \preceq I_{\regB}.
 \label{eq:td-parseval-B}
\end{align}
Likewise,
\begin{equation}
 \sum_x(\bfY^x)^2\preceq I_{\regC}.
 \label{eq:td-parseval-C}
\end{equation}
On the range of $\Pi_x$, the positive operator $(\bfX^x)^2$ is at least $\tau^2I$.  Hence
$\Pi_x\preceq\tau^{-2}(\bfX^x)^2$.
Summing over $x$ proves the first inequality in~\eqref{eq:td-threshold-sums}; the second is identical.

It follows that $\{\tau^2\Pi_x\}_x$ and $\{\tau^2\Theta_x\}_x$ are local sub-POVMs.  Assigning each residual effect to an arbitrary fixed output string completes them to POVMs.  Their joint probability of outputting the correct full string is at least
\begin{align*}
 \expect_x\bra{\psi^x}
 (\tau^2\Pi_x\otimes\tau^2\Theta_x\otimes I_{\regE})
 \ket{\psi^x}
 &=\tau^4\expect_x\norm{\ket{\psi_{\HH}^x}}^2\\
 &=\tau^4\operatorname{wt}(\HH).
\end{align*}
By the definition of $p_{\srch}$, this is at most $p_{\srch}$, proving~\eqref{eq:td-HH-weight}.
\end{proof}

Rather than bounding separately the $\HH$ diagonal term and the three cross terms involving $\HH$ in~\eqref{eq:td-diagonal-cross-expansion}, we remove the entire $\HH$ vector from the mathematical estimate.  We also remove any of the other three indexed component families whose average weight is below $\delta$.  This second removal will later ensure that whenever both $\HL$ and $\LH$ remain, their average weights are bounded away from zero.

For $U\in\{\HL,\LH,\LL\}$, define
\begin{equation}
 \kappa_U:=
 \begin{cases}
  1,&\operatorname{wt}(U)\ge\delta,\\
  0,&\operatorname{wt}(U)<\delta.
 \end{cases}
 \label{eq:td-kappa}
\end{equation}
Set
\begin{equation}
 \ket{\phi_U^x}:=\kappa_U\ket{\psi_U^x},
 \qquad
 \ket{\phi^x}:=
 \ket{\phi_{\HL}^x}
 +\ket{\phi_{\LH}^x}
 +\ket{\phi_{\LL}^x},
 \label{eq:td-retained-vector}
\end{equation}
and define
\begin{equation}
 \ket{e^x}:=\ket{\psi^x}-\ket{\phi^x},
 \qquad
 q:=\expect_x\norm{\ket{e^x}}^2.
 \label{eq:td-omitted-vector}
\end{equation}
No party performs this removal.  It is only a decomposition used in the estimate.

\begin{claim}[Cost of all omitted diagonal and cross terms]
We have
\begin{equation}
 q\le\frac{p_{\srch}}{\tau^4}+3\delta,
 \label{eq:td-q-bound}
\end{equation}
and
\begin{equation}
 p_{\pred}
 \le
 \expect_x\bra{\phi^x}\bfG_x\ket{\phi^x}
 +2\sqrt q+q.
 \label{eq:td-deletion-bound}
\end{equation}
\end{claim}

\begin{proof}
The four original components are orthogonal.  The omitted $\HH$ family has average weight at most $p_{\srch}/\tau^4$ by~\eqref{eq:td-HH-weight}; each additional omitted family has average weight below $\delta$.  This proves~\eqref{eq:td-q-bound}.

Using $\ket{\psi^x}=\ket{\phi^x}+\ket{e^x}$,
\begin{align*}
 p_{\pred}
 &=\expect_x\bra{\phi^x}\bfG_x\ket{\phi^x}
 +2\operatorname{Re}\expect_x\bra{\phi^x}\bfG_x\ket{e^x}
 +\expect_x\bra{e^x}\bfG_x\ket{e^x}.
\end{align*}
The last term is at most $q$ because $0\preceq\bfG_x\preceq I$.  Moreover,
\begin{align*}
 \left|\expect_x\bra{\phi^x}\bfG_x\ket{e^x}\right|
 &=\left|\expect_x
 \langle \bfG_x^{1/2}\phi^x\mid\bfG_x^{1/2}e^x\rangle\right|\\
 &\le
 \left(\expect_x\bra{\phi^x}\bfG_x\ket{\phi^x}\right)^{1/2}
 \left(\expect_x\bra{e^x}\bfG_x\ket{e^x}\right)^{1/2}\\
 &\le\sqrt q.
\end{align*}
This proves~\eqref{eq:td-deletion-bound}.
\end{proof}

The claim has now accounted for the $\HH$ diagonal term, every cross term involving $\HH$, and every term involving any other omitted family.  It remains to show
\begin{equation}
 \expect_x\bra{\phi^x}\bfG_x\ket{\phi^x}
 \le\frac12+\eta.
 \label{eq:td-retained-target}
\end{equation}

\subsection{Diagonal and cross terms of the retained vector}
\label{sec:nlsd-retained}

Define the retained average weights
\begin{equation}
 W_U:=\expect_x\norm{\ket{\phi_U^x}}^2,
 \qquad U\in\{\HL,\LH,\LL\}.
 \label{eq:td-retained-weights}
\end{equation}
Each $W_U$ is either zero or at least $\delta$.

We first dispose of the case in which one of the two one-high components is absent.
If $W_{\HL}=0$, then $\ket{\phi_{\HL}^x}=0$ for every $x$, and every retained vector lies in Bob's low subspace.  The first inequality in~\eqref{eq:td-marginal-bounds} gives
\begin{equation}
 \bra{\phi^x}\bfG_x\ket{\phi^x}
 \le\frac{1+\tau}{2}\norm{\ket{\phi^x}}^2
 \le\left(\frac12+\eta\right)\norm{\ket{\phi^x}}^2.
 \label{eq:td-easy-B-low}
\end{equation}
If $W_{\LH}=0$, then every retained vector lies in Charlie's low subspace, and the second inequality in~\eqref{eq:td-marginal-bounds} gives the same result.  Thus~\eqref{eq:td-retained-target} already holds whenever
\begin{equation}
 W_{\HL}=0\quad\text{or}\quad W_{\LH}=0.
 \label{eq:td-easy-case}
\end{equation}

We henceforth assume
\begin{equation}
 W_{\HL}\ge\delta,
 \qquad
 W_{\LH}\ge\delta,
 \qquad
 \sqrt{W_{\HL}W_{\LH}}\ge\delta.
 \label{eq:td-hard-case-weights}
\end{equation}
In this case $\ket{\phi_{\HL}^x}=\ket{\psi_{\HL}^x}$ and
$\ket{\phi_{\LH}^x}=\ket{\psi_{\LH}^x}$ for every $x$.

The remaining diagonal and cross terms can now be seen directly by writing
\begin{align}
 &\bra{\phi^x}
 \left(\left(\frac12+\eta\right)I-\bfG_x\right)
 \ket{\phi^x}
 \notag\\
 &=\sum_{U\in\{\HL,\LH,\LL\}}
 \bra{\phi_U^x}
 \left(\left(\frac12+\eta\right)I-\bfG_x\right)
 \ket{\phi_U^x}
 \notag\\
 &\quad
 -2\operatorname{Re}\bra{\phi_{\HL}^x}\bfG_x\ket{\phi_{\LH}^x}
 -2\operatorname{Re}\bra{\phi_{\HL}^x}\bfG_x\ket{\phi_{\LL}^x}
 -2\operatorname{Re}\bra{\phi_{\LH}^x}\bfG_x\ket{\phi_{\LL}^x}.
 \label{eq:td-retained-diagonal-cross}
\end{align}
We work with the operator
$\left(\frac12+\eta\right)I-\bfG_x$ because it is sufficient for
~\eqref{eq:td-retained-target} to prove that the average of
~\eqref{eq:td-retained-diagonal-cross} is nonnegative.  Indeed, that
nonnegativity gives
\begin{align*}
 \expect_x\bra{\phi^x}\bfG_x\ket{\phi^x}
 &\le
 \left(\frac12+\eta\right)
 \expect_x\norm{\ket{\phi^x}}^2
 \le\frac12+\eta.
\end{align*}
More importantly, on any subspace where Bob or Charlie is low, the quadratic
form of $\left(\frac12+\eta\right)I-\bfG_x$ has a positive lower
bound.  This lets us combine the $\LL$ diagonal term with both cross terms
involving $\LL$ by completing the square.

For each $x$, define the compressed $\LL$-block operator
\begin{equation}
 \boxed{
 \begin{aligned}
 M_x
 &:=
 \left.
 P_{\LL}^x
 \left(\left(\frac12+\eta\right)I-\bfG_x\right)
 P_{\LL}^x
 \right|_{\operatorname{ran}(P_{\LL}^x)}.
 \end{aligned}
 }
 \label{eq:td-claim24-Mx}
\end{equation}
Here $M_x$ is regarded as an operator from
$\operatorname{ran}(P_{\LL}^x)$ to itself.
Equation~\eqref{eq:td-LL-operator-margin-expanded} establishes that
$M_x\succeq(\eta-\eta^3/2)I$ on this subspace, so $M_x$ is invertible
there and $M_x^{-1}$ denotes this inverse.  By contrast, the sandwiched
operator in~\eqref{eq:td-claim24-Mx}, if regarded as an operator on the
entire ambient space, vanishes on $\ker(P_{\LL}^x)$ and is therefore
generally not invertible there.  If $\operatorname{ran}(P_{\LL}^x)$ is
zero-dimensional, every term containing $M_x^{-1}$ is defined to be zero.

For each $x$, define the scalar
\begin{equation}
\begin{aligned}
 c_x:={}&
 \bra{\phi_{\HL}^x}\bfG_x\ket{\phi_{\LH}^x}
 \\[0.2em]
 &+\bra{\phi_{\HL}^x}
 \bfG_xP_{\LL}^x
 M_x^{-1}
 P_{\LL}^x\bfG_x
 \ket{\phi_{\LH}^x}.
\end{aligned}
\label{eq:td-cx}
\end{equation}
The first term in $c_x$ is the direct $\HL$--$\LH$ cross term in~\eqref{eq:td-retained-diagonal-cross}.  The second is the additional $\HL$--$\LH$ cross term created when the two cross terms involving $\LL$ are minimized together with the $\LL$ diagonal term.

\begin{claim}[The retained expectation reduces to the average of $c_x$]
\label{clm:td-retained-reduction}
\begin{equation}
\begin{aligned}
 &\expect_x\bra{\phi^x}
 \left(\left(\frac12+\eta\right)I-\bfG_x\right)
 \ket{\phi^x}
 \\
 &\qquad\ge
 \left(\eta-\frac{\eta^3}{2}\right)
 (W_{\HL}+W_{\LH})
 -2\left|\expect_xc_x\right|.
\end{aligned}
\label{eq:td-main-reduction}
\end{equation}
\end{claim}

\begin{proof}
Fix $x$.  If
$\ket{z}\in\operatorname{ran}(P_{\HL}^x+P_{\LL}^x)$, then the
definitions of the two projections give
\begin{align}
 P_{\HL}^x+P_{\LL}^x
 &=\Pi_x\otimes O_x\otimes I_{\regE}
   +N_x\otimes O_x\otimes I_{\regE}
 \notag\\
 &=(\Pi_x+N_x)\otimes O_x\otimes I_{\regE}
 \notag\\
 &=I_{\regB}\otimes O_x\otimes I_{\regE}.
 \label{eq:td-Charlie-low-projection}
\end{align}
Consequently,
\begin{equation}
 (I_{\regB}\otimes O_x\otimes I_{\regE})\ket{z}=\ket{z}.
 \label{eq:td-Charlie-low-support}
\end{equation}
Since $O_x$ is the spectral projection onto the eigenvalues of
$\bfY^x$ whose absolute values are smaller than $\tau$,
\begin{align}
 &\bra{z}
 (I_{\regB}\otimes\bfY^x\otimes I_{\regE})
 \ket{z}
 \notag\\
 &\qquad=
 \bra{z}
 (I_{\regB}\otimes O_x\bfY^xO_x\otimes I_{\regE})
 \ket{z}
 \notag\\
 &\qquad\le\tau\norm{\ket{z}}^2.
 \label{eq:td-Charlie-low-insertion}
\end{align}
The second marginal inequality in~\eqref{eq:td-marginal-bounds} now gives
\begin{align}
 \bra{z}\bfG_x\ket{z}
 &\le
 \frac12\bra{z}
 \left(I_{\regB\regC}+I_{\regB}\otimes\bfY^x\right)
 \ket{z}
 \notag\\
 &=\frac12\norm{\ket{z}}^2
   +\frac12\bra{z}
   \left(I_{\regB}\otimes\bfY^x\right)
   \ket{z}
 \notag\\
 &\le \frac{1+\tau}{2}\norm{\ket{z}}^2
 =\frac{1+\eta^3}{2}\norm{\ket{z}}^2,
 \label{eq:td-Charlie-low-G-bound-expanded}
\end{align}
and therefore
\begin{align}
 &\bra{z}
 \left(\left(\frac12+\eta\right)I-\bfG_x\right)
 \ket{z}
 \notag\\
 &\qquad=
 \left(\frac12+\eta\right)\norm{\ket{z}}^2
 -\bra{z}\bfG_x\ket{z}
 \notag\\
 &\qquad\ge
 \left(\frac12+\eta-\frac{1+\eta^3}{2}\right)
 \norm{\ket{z}}^2
 \notag\\
 &\qquad=
 \left(\eta-\frac{\eta^3}{2}\right)
 \norm{\ket{z}}^2.
 \label{eq:td-Charlie-low-margin}
\end{align}
If
$\ket{z}\in\operatorname{ran}(P_{\LH}^x+P_{\LL}^x)$, then similarly
\begin{align}
 P_{\LH}^x+P_{\LL}^x
 &=N_x\otimes\Theta_x\otimes I_{\regE}
   +N_x\otimes O_x\otimes I_{\regE}
 \notag\\
 &=N_x\otimes(\Theta_x+O_x)\otimes I_{\regE}
 \notag\\
 &=N_x\otimes I_{\regC}\otimes I_{\regE}.
 \label{eq:td-Bob-low-projection}
\end{align}
Thus
\begin{equation}
 (N_x\otimes I_{\regC}\otimes I_{\regE})\ket{z}=\ket{z}.
 \label{eq:td-Bob-low-support}
\end{equation}
Since $N_x$ is the spectral projection onto the eigenvalues of
$\bfX^x$ whose absolute values are smaller than $\tau$,
\begin{align}
 &\bra{z}
 (\bfX^x\otimes I_{\regC}\otimes I_{\regE})
 \ket{z}
 \notag\\
 &\qquad=
 \bra{z}
 (N_x\bfX^xN_x\otimes I_{\regC}\otimes I_{\regE})
 \ket{z}
 \notag\\
 &\qquad\le\tau\norm{\ket{z}}^2.
 \label{eq:td-Bob-low-insertion}
\end{align}
The first marginal inequality in~\eqref{eq:td-marginal-bounds} therefore gives
\begin{align}
 \bra{z}\bfG_x\ket{z}
 &\le
 \frac12\bra{z}
 \left(I_{\regB\regC}+\bfX^x\otimes I_{\regC}\right)
 \ket{z}
 \notag\\
 &=\frac12\norm{\ket{z}}^2
   +\frac12\bra{z}
   \left(\bfX^x\otimes I_{\regC}\right)
   \ket{z}
 \notag\\
 &\le \frac{1+\tau}{2}\norm{\ket{z}}^2
 =\frac{1+\eta^3}{2}\norm{\ket{z}}^2,
 \label{eq:td-Bob-low-G-bound-expanded}
\end{align}
and therefore
\begin{align}
 &\bra{z}
 \left(\left(\frac12+\eta\right)I-\bfG_x\right)
 \ket{z}
 \notag\\
 &\qquad=
 \left(\frac12+\eta\right)\norm{\ket{z}}^2
 -\bra{z}\bfG_x\ket{z}
 \notag\\
 &\qquad\ge
 \left(\frac12+\eta-\frac{1+\eta^3}{2}\right)
 \norm{\ket{z}}^2
 \notag\\
 &\qquad=
 \left(\eta-\frac{\eta^3}{2}\right)
 \norm{\ket{z}}^2.
 \label{eq:td-Bob-low-margin}
\end{align}
In particular, for every nonzero
$\ket{z}\in\operatorname{ran}(P_{\LL}^x)$,
\begin{align}
 &\bra{z}
 P_{\LL}^x
 \left(\left(\frac12+\eta\right)I-\bfG_x\right)
 P_{\LL}^x
 \ket{z}
 \notag\\
 &\qquad=
 \bra{z}
 \left(\left(\frac12+\eta\right)I-\bfG_x\right)
 \ket{z}
 \notag\\
 &\qquad\ge
 \left(\eta-\frac{\eta^3}{2}\right)
 \norm{\ket{z}}^2
 >0.
 \label{eq:td-LL-positive-expanded}
\end{align}
The inequality before the final strict inequality also holds for
$\ket{z}=0$.  Hence, for every vector $\ket{v}$ in the full space, we may
apply it to $\ket{z}=P_{\LL}^x\ket{v}$ and obtain
\begin{align}
 &\bra{v}
 P_{\LL}^x
 \left(\left(\frac12+\eta\right)I-\bfG_x\right)
 P_{\LL}^x
 \ket{v}
 \notag\\
 &\qquad\ge
 \left(\eta-\frac{\eta^3}{2}\right)
 \bra{v}P_{\LL}^x\ket{v}.
 \label{eq:td-LL-quadratic-form-order}
\end{align}
By the definition of positive-semidefinite order through quadratic forms,
this is precisely the operator inequality
\begin{equation}
 P_{\LL}^x
 \left(\left(\frac12+\eta\right)I-\bfG_x\right)
 P_{\LL}^x
 \succeq
 \left(\eta-\frac{\eta^3}{2}\right)P_{\LL}^x.
 \label{eq:td-LL-operator-margin-expanded}
\end{equation}
Because
$\eta-\eta^3/2=\eta(1-\eta^2/2)>0$, the restriction of the
left-hand side to $\operatorname{ran}(P_{\LL}^x)$ is strictly positive and
therefore invertible there.  This proves that the inverse in
~\eqref{eq:td-cx} exists on that subspace.

Using
$\ket{\phi^x}=\ket{\phi_{\HL}^x}+\ket{\phi_{\LH}^x}+\ket{\phi_{\LL}^x}$
and the pairwise orthogonality of these three vectors,
\begin{align}
 &\bra{\phi^x}
 \left(\left(\frac12+\eta\right)I-\bfG_x\right)
 \ket{\phi^x}
 \notag\\
 &=
 \bra{\phi_{\HL}^x}
 \left(\left(\frac12+\eta\right)I-\bfG_x\right)
 \ket{\phi_{\HL}^x}
 \notag\\
 &\quad+
 \bra{\phi_{\LH}^x}
 \left(\left(\frac12+\eta\right)I-\bfG_x\right)
 \ket{\phi_{\LH}^x}
 \notag\\
 &\quad+
 \bra{\phi_{\LL}^x}
 \left(\left(\frac12+\eta\right)I-\bfG_x\right)
 \ket{\phi_{\LL}^x}
 \notag\\
 &\quad-
 2\operatorname{Re}
 \bra{\phi_{\HL}^x}\bfG_x\ket{\phi_{\LH}^x}
 \notag\\
 &\quad-
 2\operatorname{Re}
 \bra{\phi_{\HL}^x}\bfG_x\ket{\phi_{\LL}^x}
 \notag\\
 &\quad-
 2\operatorname{Re}
 \bra{\phi_{\LH}^x}\bfG_x\ket{\phi_{\LL}^x}.
 \label{eq:td-claim24-full-expansion}
\end{align}
We now combine the $\LL$ diagonal term with the two cross terms involving
$\ket{\phi_{\LL}^x}$, namely
\begin{align}
 &\bra{\phi_{\LL}^x}
 \left(\left(\frac12+\eta\right)I-\bfG_x\right)
 \ket{\phi_{\LL}^x}
 \notag\\
 &\quad-
 2\operatorname{Re}
 \bra{\phi_{\LL}^x}\bfG_x\ket{\phi_{\HL}^x}
 \notag\\
 &\quad-
 2\operatorname{Re}
 \bra{\phi_{\LL}^x}\bfG_x\ket{\phi_{\LH}^x}.
 \label{eq:td-claim24-LL-terms}
\end{align}
Here we used the Hermiticity of $\bfG_x$ to reverse the two cross scalars
inside their real parts.

Recall from~\eqref{eq:td-claim24-Mx} that $M_x$ is the compression of
$\left(\frac12+\eta\right)I-\bfG_x$ to
$\operatorname{ran}(P_{\LL}^x)$ and hence is the coefficient of the
quadratic term in $\ket{\phi_{\LL}^x}$.  By
~\eqref{eq:td-LL-operator-margin-expanded}, $M_x$ is positive definite, so
$M_x^{-1}$ exists and is Hermitian on this subspace.

For this same calculation, also define
\begin{equation}
 \ket{u_x}
 :=
 P_{\LL}^x\bfG_x
 \left(\ket{\phi_{\HL}^x}+\ket{\phi_{\LH}^x}\right).
 \label{eq:td-claim24-ux}
\end{equation}
Taking the adjoint reverses the operator order.  In particular,
\begin{equation}
 (P_{\LL}^x\bfG_x)^\dagger
 =\bfG_xP_{\LL}^x,
 \label{eq:td-claim24-adjoint-order}
\end{equation}
and hence
\begin{equation}
 \bra{u_x}
 =
 \left(\bra{\phi_{\HL}^x}+\bra{\phi_{\LH}^x}\right)
 \bfG_xP_{\LL}^x.
 \label{eq:td-claim24-ux-adjoint}
\end{equation}
No commutation between $P_{\LL}^x$ and $\bfG_x$ is being used.

Because
$P_{\LL}^x\ket{\phi_{\LL}^x}=\ket{\phi_{\LL}^x}$, the three terms in
~\eqref{eq:td-claim24-LL-terms} first satisfy the intermediate equality
\begin{align}
 &\bra{\phi_{\LL}^x}
 \left(\left(\frac12+\eta\right)I-\bfG_x\right)
 \ket{\phi_{\LL}^x}
 \notag\\
 &\quad-
 2\operatorname{Re}
 \bra{\phi_{\LL}^x}\bfG_x
 \left(\ket{\phi_{\HL}^x}+\ket{\phi_{\LH}^x}\right)
 \notag\\
 &=
 \bra{\phi_{\LL}^x}M_x\ket{\phi_{\LL}^x}
 -2\operatorname{Re}\bra{\phi_{\LL}^x}\ket{u_x}.
 \label{eq:td-claim24-LL-Mu}
\end{align}
This equality accounts for every inserted copy of $P_{\LL}^x$.

\medskip
\begin{itshape}
\noindent\textbf{Remark (completing the square and why $M^{-1}$ appears).}
The quadratic expression under consideration is precisely the part of
~\eqref{eq:td-claim24-LL-Mu} that depends on the $\LL$ vector.  Hold the
$\HL$ and $\LH$ components fixed, and let
$\ket{\ell}\in\operatorname{ran}(P_{\LL}^x)$ stand for a possible
$\LL$ component.  The relevant expression is
\[
 q_x(\ell)
 :=
 \bra{\ell}M_x\ket{\ell}
 -2\operatorname{Re}\bra{\ell}\ket{u_x}.
\]
Its value for the state in the proof is
$q_x(\phi_{\LL}^x)$.  To obtain a lower bound, we may minimize
$q_x(\ell)$ over all vectors in the $\LL$ sector, because the actual
vector $\ket{\phi_{\LL}^x}$ is one of the allowed choices.

Here is the general completion-of-the-square calculation.  Let $M$ be positive
definite and set
\[
 q(\ell):=
 \bra{\ell}M\ket{\ell}
 -2\operatorname{Re}\bra{\ell}\ket{u}.
\]
Since
\[
 2\operatorname{Re}\bra{\ell}\ket{u}
 =
 \bra{\ell}\ket{u}+\bra{u}\ket{\ell},
\]
we add and subtract $\bra{u}M^{-1}\ket{u}$.  Expanding the resulting
product gives
\begin{align*}
 &\left(\bra{\ell}-\bra{u}M^{-1}\right)
 M
 \left(\ket{\ell}-M^{-1}\ket{u}\right)
 \notag\\
 &\qquad=
 \bra{\ell}M\ket{\ell}
 -\bra{\ell}\ket{u}
 -\bra{u}\ket{\ell}
 +\bra{u}M^{-1}\ket{u}.
\end{align*}
Therefore
\[
 q(\ell)
 =
 \left(\bra{\ell}-\bra{u}M^{-1}\right)
 M
 \left(\ket{\ell}-M^{-1}\ket{u}\right)
 -\bra{u}M^{-1}\ket{u}.
\]
This is the completion of the square.  The first term is nonnegative and
vanishes exactly when
$M\ket{\ell}=\ket{u}$, equivalently
$\ket{\ell}=M^{-1}\ket{u}$.  Thus minimizing the eliminated sector
produces the correction $-\bra{u}M^{-1}\ket{u}$; this is the corresponding
Schur-complement correction.  In~\eqref{eq:td-claim24-complete-square-full},
$M=M_x$, $\ket{u}=\ket{u_x}$, and
$\ket{\ell}=\ket{\phi_{\LL}^x}$.
\end{itshape}
\medskip

Completing the square now gives the shorter identity
\begin{align}
 &\bra{\phi_{\LL}^x}M_x\ket{\phi_{\LL}^x}
 -2\operatorname{Re}\bra{\phi_{\LL}^x}\ket{u_x}
 \notag\\
 &=
 \left(\bra{\phi_{\LL}^x}-\bra{u_x}M_x^{-1}\right)
 M_x
 \left(\ket{\phi_{\LL}^x}-M_x^{-1}\ket{u_x}\right)
 -\bra{u_x}M_x^{-1}\ket{u_x}.
 \label{eq:td-claim24-complete-square-full}
\end{align}
The first term on the right is nonnegative.  Therefore the three terms in
~\eqref{eq:td-claim24-LL-terms} are at least
\begin{equation}
 -\bra{u_x}M_x^{-1}\ket{u_x}.
 \label{eq:td-claim24-complete-square-lower}
\end{equation}

Only after completing the square do we expand this scalar.  The definition
of $\ket{u_x}$ and the adjoint formula~\eqref{eq:td-claim24-ux-adjoint}
give
\begin{align}
 &\bra{u_x}M_x^{-1}\ket{u_x}
 \notag\\
 &=
 \bra{\phi_{\HL}^x}
 \bfG_xP_{\LL}^x
 M_x^{-1}
 P_{\LL}^x\bfG_x
 \ket{\phi_{\HL}^x}
 \notag\\
 &\quad+
 \bra{\phi_{\LH}^x}
 \bfG_xP_{\LL}^x
 M_x^{-1}
 P_{\LL}^x\bfG_x
 \ket{\phi_{\LH}^x}
 \notag\\
 &\quad+
 2\operatorname{Re}
 \bra{\phi_{\HL}^x}
 \bfG_xP_{\LL}^x
 M_x^{-1}
 P_{\LL}^x\bfG_x
 \ket{\phi_{\LH}^x}.
 \label{eq:td-claim24-induced-expansion}
\end{align}
The left copy of $\bfG_x$ in every scalar comes from $\bra{u_x}$, and the
right copy comes from $\ket{u_x}$; both copies are required.
By the definition of $M_x$, the operator
$P_{\LL}^xM_x^{-1}P_{\LL}^x$ is exactly the inverse in~\eqref{eq:td-cx},
embedded in the full space by the surrounding projections.  We now combine
~\eqref{eq:td-claim24-full-expansion},
~\eqref{eq:td-claim24-complete-square-lower},
~\eqref{eq:td-claim24-induced-expansion}, and the definition of $c_x$:
\begin{align}
 &\bra{\phi^x}
 \left(\left(\frac12+\eta\right)I-\bfG_x\right)
 \ket{\phi^x}
 \notag\\
 &\ge
 \Bigg[
 \bra{\phi_{\HL}^x}
 \left(\left(\frac12+\eta\right)I-\bfG_x\right)
 \ket{\phi_{\HL}^x}
 \notag\\
 &\qquad\quad-
 \bra{\phi_{\HL}^x}
 \bfG_xP_{\LL}^x
 M_x^{-1}
 P_{\LL}^x\bfG_x
 \ket{\phi_{\HL}^x}
 \Bigg]
 \notag\\
 &\quad+
 \Bigg[
 \bra{\phi_{\LH}^x}
 \left(\left(\frac12+\eta\right)I-\bfG_x\right)
 \ket{\phi_{\LH}^x}
 \notag\\
 &\qquad\quad-
 \bra{\phi_{\LH}^x}
 \bfG_xP_{\LL}^x
 M_x^{-1}
 P_{\LL}^x\bfG_x
 \ket{\phi_{\LH}^x}
 \Bigg]
 \notag\\
 &\quad-2\operatorname{Re}c_x.
 \label{eq:td-claim24-decomposition-lower-bound}
\end{align}

For every $\ket{z}\in\operatorname{ran}(P_{\LL}^x)$,
\begin{align}
 &(\bra{\phi_{\HL}^x}+\bra{z})
 \left(\left(\frac12+\eta\right)I-\bfG_x\right)
 (\ket{\phi_{\HL}^x}+\ket{z})
 \notag\\
 &=
 \bra{\phi_{\HL}^x}
 \left(\left(\frac12+\eta\right)I-\bfG_x\right)
 \ket{\phi_{\HL}^x}
 \notag\\
 &\quad+
 \bra{z}M_x\ket{z}
 \notag\\
 &\quad-
 2\operatorname{Re}
 \bra{z}P_{\LL}^x\bfG_x\ket{\phi_{\HL}^x}
 \notag\\[0.3em]
 &=
 \left(
 \bra{z}
 -
 \bra{\phi_{\HL}^x}\bfG_xP_{\LL}^x
 M_x^{-1}
 \right)
 M_x
 \left(
 \ket{z}
 -
 M_x^{-1}
 P_{\LL}^x\bfG_x\ket{\phi_{\HL}^x}
 \right)
 \notag\\
 &\quad+
 \bra{\phi_{\HL}^x}
 \left(\left(\frac12+\eta\right)I-\bfG_x\right)
 \ket{\phi_{\HL}^x}
 \notag\\
 &\quad-
 \bra{\phi_{\HL}^x}
 \bfG_xP_{\LL}^x
 M_x^{-1}
 P_{\LL}^x\bfG_x
 \ket{\phi_{\HL}^x}.
 \label{eq:td-claim24-HL-square}
\end{align}
Taking the minimum over $\ket{z}\in\operatorname{ran}(P_{\LL}^x)$ in
~\eqref{eq:td-claim24-HL-square} yields
\begin{align}
 &\bra{\phi_{\HL}^x}
 \left(\left(\frac12+\eta\right)I-\bfG_x\right)
 \ket{\phi_{\HL}^x}
 \notag\\
 &\quad-
 \bra{\phi_{\HL}^x}
 \bfG_xP_{\LL}^x
 M_x^{-1}
 P_{\LL}^x\bfG_x
 \ket{\phi_{\HL}^x}
 \notag\\
 &=
 \min_{\ket{z}\in\operatorname{ran}(P_{\LL}^x)}
 (\bra{\phi_{\HL}^x}+\bra{z})
 \left(\left(\frac12+\eta\right)I-\bfG_x\right)
 (\ket{\phi_{\HL}^x}+\ket{z})
 \notag\\
 &\ge
 \min_{\ket{z}\in\operatorname{ran}(P_{\LL}^x)}
 \left(\eta-\frac{\eta^3}{2}\right)
 \norm{\ket{\phi_{\HL}^x}+\ket{z}}^2
 \notag\\
 &=
 \min_{\ket{z}\in\operatorname{ran}(P_{\LL}^x)}
 \left(\eta-\frac{\eta^3}{2}\right)
 \left(
  \norm{\ket{\phi_{\HL}^x}}^2+\norm{\ket{z}}^2
 \right)
 \notag\\
 &=
 \left(\eta-\frac{\eta^3}{2}\right)
 \norm{\ket{\phi_{\HL}^x}}^2.
 \label{eq:td-claim24-HL-bound-expanded}
\end{align}
Likewise, for every $\ket{z}\in\operatorname{ran}(P_{\LL}^x)$,
\begin{align}
 &(\bra{\phi_{\LH}^x}+\bra{z})
 \left(\left(\frac12+\eta\right)I-\bfG_x\right)
 (\ket{\phi_{\LH}^x}+\ket{z})
 \notag\\
 &=
 \bra{\phi_{\LH}^x}
 \left(\left(\frac12+\eta\right)I-\bfG_x\right)
 \ket{\phi_{\LH}^x}
 \notag\\
 &\quad+
 \bra{z}M_x\ket{z}
 \notag\\
 &\quad-
 2\operatorname{Re}
 \bra{z}P_{\LL}^x\bfG_x\ket{\phi_{\LH}^x}
 \notag\\[0.3em]
 &=
 \left(
 \bra{z}
 -
 \bra{\phi_{\LH}^x}\bfG_xP_{\LL}^x
 M_x^{-1}
 \right)
 M_x
 \left(
 \ket{z}
 -
 M_x^{-1}
 P_{\LL}^x\bfG_x\ket{\phi_{\LH}^x}
 \right)
 \notag\\
 &\quad+
 \bra{\phi_{\LH}^x}
 \left(\left(\frac12+\eta\right)I-\bfG_x\right)
 \ket{\phi_{\LH}^x}
 \notag\\
 &\quad-
 \bra{\phi_{\LH}^x}
 \bfG_xP_{\LL}^x
 M_x^{-1}
 P_{\LL}^x\bfG_x
 \ket{\phi_{\LH}^x}.
 \label{eq:td-claim24-LH-square}
\end{align}
Taking the minimum over $\ket{z}\in\operatorname{ran}(P_{\LL}^x)$ in
~\eqref{eq:td-claim24-LH-square} yields
\begin{align}
 &\bra{\phi_{\LH}^x}
 \left(\left(\frac12+\eta\right)I-\bfG_x\right)
 \ket{\phi_{\LH}^x}
 \notag\\
 &\quad-
 \bra{\phi_{\LH}^x}
 \bfG_xP_{\LL}^x
 M_x^{-1}
 P_{\LL}^x\bfG_x
 \ket{\phi_{\LH}^x}
 \notag\\
 &=
 \min_{\ket{z}\in\operatorname{ran}(P_{\LL}^x)}
 (\bra{\phi_{\LH}^x}+\bra{z})
 \left(\left(\frac12+\eta\right)I-\bfG_x\right)
 (\ket{\phi_{\LH}^x}+\ket{z})
 \notag\\
 &\ge
 \min_{\ket{z}\in\operatorname{ran}(P_{\LL}^x)}
 \left(\eta-\frac{\eta^3}{2}\right)
 \norm{\ket{\phi_{\LH}^x}+\ket{z}}^2
 \notag\\
 &=
 \min_{\ket{z}\in\operatorname{ran}(P_{\LL}^x)}
 \left(\eta-\frac{\eta^3}{2}\right)
 \left(
  \norm{\ket{\phi_{\LH}^x}}^2+\norm{\ket{z}}^2
 \right)
 \notag\\
 &=
 \left(\eta-\frac{\eta^3}{2}\right)
 \norm{\ket{\phi_{\LH}^x}}^2.
 \label{eq:td-claim24-LH-bound-expanded}
\end{align}
Combining~\eqref{eq:td-claim24-decomposition-lower-bound},
~\eqref{eq:td-claim24-HL-bound-expanded}, and
~\eqref{eq:td-claim24-LH-bound-expanded},
\begin{align}
 &\bra{\phi^x}
 \left(\left(\frac12+\eta\right)I-\bfG_x\right)
 \ket{\phi^x}
 \notag\\
 &\qquad\ge
 \left(\eta-\frac{\eta^3}{2}\right)
 \left(
  \norm{\ket{\phi_{\HL}^x}}^2
  +\norm{\ket{\phi_{\LH}^x}}^2
 \right)
 -2\operatorname{Re}c_x.
 \label{eq:td-pointwise-reduction}
\end{align}
Finally,
\begin{align}
 &\expect_x\bra{\phi^x}
 \left(\left(\frac12+\eta\right)I-\bfG_x\right)
 \ket{\phi^x}
 \notag\\
 &\qquad\ge
 \left(\eta-\frac{\eta^3}{2}\right)
 \expect_x\left[
  \norm{\ket{\phi_{\HL}^x}}^2
  +\norm{\ket{\phi_{\LH}^x}}^2
 \right]
 -2\expect_x[\operatorname{Re}c_x]
 \notag\\
 &\qquad=
 \left(\eta-\frac{\eta^3}{2}\right)
 (W_{\HL}+W_{\LH})
 -2\operatorname{Re}\expect_x[c_x]
 \notag\\
 &\qquad\ge
 \left(\eta-\frac{\eta^3}{2}\right)
 (W_{\HL}+W_{\LH})
 -2\left|\expect_xc_x\right|.
 \label{eq:td-claim24-average-expanded}
\end{align}
This is~\eqref{eq:td-main-reduction}.
\end{proof}

By the arithmetic--geometric mean inequality, the claim shows that~\eqref{eq:td-retained-target} follows once we prove
\begin{equation}
 \left|\expect_xc_x\right|
 \le
 \left(\eta-\frac{\eta^3}{2}\right)
 \sqrt{W_{\HL}W_{\LH}}.
 \label{eq:td-cx-target}
\end{equation}
Indeed,
$2\sqrt{W_{\HL}W_{\LH}}\le W_{\HL}+W_{\LH}$.
Thus the entire proof has now been reduced to one averaged cross scalar.

\subsection{Expand the effective cross scalar into common-challenge contributions}
\label{sec:nlsd-expand}

The definition of $c_x$ still contains the inverse of the restriction of
$((1/2+\eta)I-\bfG_x)$ to the $\LL$ subspace.  The next claim rewrites this inverse as a geometric series involving the common-challenge operator $\bfZ$.  This is useful because every copy of
\[
 \bfZ=\expect_r[\Delta_{\bob}^r\otimes\Delta_{\charlie}^r]
\]
uses the same challenge on Bob's and Charlie's sides.

Define
\begin{equation}
 \Gamma_1
 :=\expect_x\bra{\phi_{\HL}^x}\bfZ\ket{\phi_{\LH}^x},
 \label{eq:td-Gamma1}
\end{equation}
and, for $m\ge2$,
\begin{equation}
 \Gamma_m
 :=\expect_x\bra{\phi_{\HL}^x}
 \bfZ P_{\LL}^x
 (P_{\LL}^x\bfZ P_{\LL}^x)^{m-2}
 P_{\LL}^x\bfZ
 \ket{\phi_{\LH}^x}.
 \label{eq:td-Gammam}
\end{equation}

\begin{claim}[Geometric expansion of the average effective cross term]
\label{clm:td-geometric-expansion}
\begin{equation}
 \left|
  \expect_xc_x
  -\frac14\left[
   \Gamma_1+
   \sum_{k=0}^{\infty}(1+4\eta)^{-(k+1)}\Gamma_{k+2}
  \right]
 \right|
 \le
 \frac{\eta}{16}\sqrt{W_{\HL}W_{\LH}}.
 \label{eq:td-series-approximation}
\end{equation}
\end{claim}

\begin{proof}
Fix $x$.

We first isolate the part of $\bfG_x$ that changes a high/low label.  In
~\eqref{eq:td-Gx}, the identity preserves each of the four high/low
subspaces.  Moreover, $\bfX^x$ commutes with its spectral projections
$\Pi_x,N_x$ and acts trivially on Charlie, while $\bfY^x$ commutes with
$\Theta_x,O_x$ and acts trivially on Bob.  In particular,
\begin{equation*}
 \Pi_xN_x=N_x\Pi_x=0,
 \qquad
 O_x\Theta_x=\Theta_xO_x=0.
\end{equation*}
and the spectral subspaces reduce the corresponding Fourier operators:
\begin{equation*}
 \Pi_x\bfX^xN_x=N_x\bfX^x\Pi_x=0,
 \qquad
 O_x\bfY^x\Theta_x=\Theta_x\bfY^xO_x=0.
\end{equation*}
We suppress the identity on $\regE$.  Using the support identities
$P_U^x\ket{\phi_U^x}=\ket{\phi_U^x}$ and the decomposition
of $\bfG_x$ in~\eqref{eq:td-Gx}, the $\HL$--$\LH$ block expands as
\begin{align}
 \bra{\phi_{\HL}^x}\bfG_x\ket{\phi_{\LH}^x}
 &=\bra{\phi_{\HL}^x}
 P_{\HL}^x\bfG_xP_{\LH}^x
 \ket{\phi_{\LH}^x}
 \notag\\
 &=\frac14\bra{\phi_{\HL}^x}
 \Bigl[
 P_{\HL}^xP_{\LH}^x
 +P_{\HL}^x(\bfX^x\otimes I_{\regC})P_{\LH}^x
 \notag\\
 &\hspace{8em}
 +P_{\HL}^x(I_{\regB}\otimes\bfY^x)P_{\LH}^x
 +P_{\HL}^x\bfZ P_{\LH}^x
 \Bigr]
 \ket{\phi_{\LH}^x}
 \notag\\
 &=\frac14\bra{\phi_{\HL}^x}
 \Bigl[
 \underbrace{(\Pi_xN_x)\otimes(O_x\Theta_x)}_{=\,0}
 +\underbrace{(\Pi_x\bfX^xN_x)\otimes(O_x\Theta_x)}_{=\,0}
 \notag\\
 &\hspace{8em}
 +\underbrace{(\Pi_xN_x)\otimes(O_x\bfY^x\Theta_x)}_{=\,0}
 +P_{\HL}^x\bfZ P_{\LH}^x
 \Bigr]
 \ket{\phi_{\LH}^x}
 \notag\\
 &=\frac14
 \bra{\phi_{\HL}^x}\bfZ\ket{\phi_{\LH}^x}.
 \label{eq:td-direct-Z}
\end{align}
Likewise, the block from $\HL$ into $\LL$ expands as
\begin{align}
 P_{\LL}^x\bfG_x\ket{\phi_{\HL}^x}
 &=P_{\LL}^x\bfG_xP_{\HL}^x\ket{\phi_{\HL}^x}
 \notag\\
 &=\frac14\Bigl[
 P_{\LL}^xP_{\HL}^x
 +P_{\LL}^x(\bfX^x\otimes I_{\regC})P_{\HL}^x
 \notag\\
 &\hspace{7em}
 +P_{\LL}^x(I_{\regB}\otimes\bfY^x)P_{\HL}^x
 +P_{\LL}^x\bfZ P_{\HL}^x
 \Bigr]\ket{\phi_{\HL}^x}
 \notag\\
 &=\frac14\Bigl[
 \underbrace{(N_x\Pi_x)\otimes O_x}_{=\,0}
 +\underbrace{(N_x\bfX^x\Pi_x)\otimes O_x}_{=\,0}
 \notag\\
 &\hspace{7em}
 +\underbrace{(N_x\Pi_x)\otimes(O_x\bfY^xO_x)}_{=\,0}
 +P_{\LL}^x\bfZ P_{\HL}^x
 \Bigr]\ket{\phi_{\HL}^x}
 \notag\\
 &=\frac14P_{\LL}^x\bfZ\ket{\phi_{\HL}^x}.
 \label{eq:td-HL-LL-Z}
\end{align}
Finally, the block from $\LH$ into $\LL$ expands as
\begin{align}
 P_{\LL}^x\bfG_x\ket{\phi_{\LH}^x}
 &=P_{\LL}^x\bfG_xP_{\LH}^x\ket{\phi_{\LH}^x}
 \notag\\
 &=\frac14\Bigl[
 P_{\LL}^xP_{\LH}^x
 +P_{\LL}^x(\bfX^x\otimes I_{\regC})P_{\LH}^x
 \notag\\
 &\hspace{7em}
 +P_{\LL}^x(I_{\regB}\otimes\bfY^x)P_{\LH}^x
 +P_{\LL}^x\bfZ P_{\LH}^x
 \Bigr]\ket{\phi_{\LH}^x}
 \notag\\
 &=\frac14\Bigl[
 \underbrace{N_x\otimes(O_x\Theta_x)}_{=\,0}
 +\underbrace{(N_x\bfX^xN_x)\otimes(O_x\Theta_x)}_{=\,0}
 \notag\\
 &\hspace{7em}
 +\underbrace{N_x\otimes(O_x\bfY^x\Theta_x)}_{=\,0}
 +P_{\LL}^x\bfZ P_{\LH}^x
 \Bigr]\ket{\phi_{\LH}^x}
 \notag\\
 &=\frac14P_{\LL}^x\bfZ\ket{\phi_{\LH}^x}.
 \label{eq:td-LH-LL-Z}
\end{align}
Using the definition of $M_x$ in~\eqref{eq:td-claim24-Mx}, these three
identities rewrite the full scalar $c_x$ as
\begin{align}
 c_x
 &=\frac14\bra{\phi_{\HL}^x}\bfZ\ket{\phi_{\LH}^x}
 \notag\\
 &\quad+\frac1{16}\bra{\phi_{\HL}^x}
 \bfZ P_{\LL}^x M_x^{-1}P_{\LL}^x\bfZ
 \ket{\phi_{\LH}^x}.
 \label{eq:td-cx-inverse-form}
\end{align}

We next approximate $M_x^{-1}$ by an inverse involving only $\bfZ$.
On $\operatorname{ran}(P_{\LL}^x)$, introduce
\begin{align}
 Q_x&:=P_{\LL}^x\bfZ P_{\LL}^x,\notag\\
 R_x&:=P_{\LL}^x\bigl(\bfX^x\otimes I_{\regC}
              +I_{\regB}\otimes\bfY^x\bigr)P_{\LL}^x,\notag\\
 A_x&:=(1+4\eta)P_{\LL}^x-Q_x.
 \label{eq:td-claim25-QRA}
\end{align}
Here and below the identity on $\regE$ is suppressed.  Multiplying the
definition of $M_x$ by four and using~\eqref{eq:td-Gx} gives the exact
operator identity
\begin{align}
 4M_x
 &=P_{\LL}^x
 \Bigl[(2+4\eta)I_{\regB\regC}-4\bfG_x\Bigr]
 P_{\LL}^x
 \notag\\
 &=P_{\LL}^x
 \Bigl[
  (2+4\eta)I_{\regB\regC}
  -\bigl(
    I_{\regB\regC}
    +\bfX^x\otimes I_{\regC}
    +I_{\regB}\otimes\bfY^x
    +\bfZ
   \bigr)
 \Bigr]
 P_{\LL}^x
 \notag\\
 &=(1+4\eta)P_{\LL}^x
   -P_{\LL}^x\bfZ P_{\LL}^x
 \notag\\
 &\quad
   -P_{\LL}^x
    \bigl(
      \bfX^x\otimes I_{\regC}
      +I_{\regB}\otimes\bfY^x
    \bigr)
    P_{\LL}^x
 \notag\\
 &=(1+4\eta)P_{\LL}^x-Q_x-R_x
 =A_x-R_x.
 \label{eq:td-inverse-expanded}
\end{align}

Since $\Delta_{\bob}^r$ and $\Delta_{\charlie}^r$ are contractions and
$\bfZ$ is Hermitian, we have
\begin{align}
 \norm{\bfZ}_\infty
 &\le\expect_r
   \norm{\Delta_{\bob}^r\otimes\Delta_{\charlie}^r}_\infty
 =\expect_r\!\left[
   \norm{\Delta_{\bob}^r}_\infty
   \norm{\Delta_{\charlie}^r}_\infty
  \right]
 \le1
 \notag\\
 &\Longrightarrow
 -P_{\LL}^x
 \preceq
 Q_x=P_{\LL}^x\bfZ P_{\LL}^x
 \preceq
 P_{\LL}^x,
 \notag\\
 A_x
 &=(1+4\eta)P_{\LL}^x-Q_x
 \succeq
 (1+4\eta)P_{\LL}^x-P_{\LL}^x
 =4\eta P_{\LL}^x,
 \notag\\
 0\preceq A_x^{-1}
 &\preceq\frac1{4\eta}P_{\LL}^x,
 \qquad
 \norm{A_x^{-1}}_\infty\le\frac1{4\eta}.
 \label{eq:td-claim25-A-margin}
\end{align}
By the definition of the low spectral projectors,
$\norm{N_x\bfX^xN_x}_\infty\le\tau$ and
$\norm{O_x\bfY^xO_x}_\infty\le\tau$.  Since
$P_{\LL}^x=N_x\otimes O_x$ (with the identity on $\regE$ suppressed),
the two compressed operators satisfy
\begin{align}
 P_{\LL}^x(\bfX^x\otimes I_{\regC})P_{\LL}^x
 &=(N_x\bfX^xN_x)\otimes O_x,
 \notag\\
 \norm{P_{\LL}^x(\bfX^x\otimes I_{\regC})P_{\LL}^x}_\infty
 &=\norm{N_x\bfX^xN_x}_\infty\norm{O_x}_\infty
 \le\tau=\eta^3,
 \notag\\
 P_{\LL}^x(I_{\regB}\otimes\bfY^x)P_{\LL}^x
 &=N_x\otimes(O_x\bfY^xO_x),
 \notag\\
 \norm{P_{\LL}^x(I_{\regB}\otimes\bfY^x)P_{\LL}^x}_\infty
 &=\norm{N_x}_\infty\norm{O_x\bfY^xO_x}_\infty
 \le\tau=\eta^3.
 \label{eq:td-low-XY}
\end{align}
so
\begin{equation}
 \norm{R_x}_\infty\le2\eta^3.
 \label{eq:td-claim25-R-bound}
\end{equation}
Finally,~\eqref{eq:td-LL-operator-margin-expanded} and
~\eqref{eq:td-inverse-expanded} imply
\begin{equation}
 A_x-R_x=4M_x\succeq(4\eta-2\eta^3)P_{\LL}^x,
 \qquad
 \norm{(A_x-R_x)^{-1}}_\infty
 \le\frac1{4\eta-2\eta^3}.
 \label{eq:td-claim25-M-margin}
\end{equation}

\medskip
\begin{itshape}
\noindent\textbf{Remark (the resolvent identity used here).}
Let $\mathcal H_x:=\operatorname{ran}(P_{\LL}^x)$, whose identity
operator is the restriction of $P_{\LL}^x$.  For an operator $T$ on
$\mathcal H_x$ and a complex number $z$ for which
$T-zI_{\mathcal H_x}$ is invertible, use the convention
\[
 \mathcal R_T(z):=(T-zI_{\mathcal H_x})^{-1}.
\]
The second resolvent identity in this convention is
standard; see, for example,~\cite[Lemma~6.5, Eq.~(6.5)]{Teschl2014}:
\[
 \mathcal R_S(z)-\mathcal R_T(z)
 =\mathcal R_S(z)(T-S)\mathcal R_T(z).
\]
The bounds in~\eqref{eq:td-claim25-A-margin} and
~\eqref{eq:td-claim25-M-margin} show that $A_x$ and $A_x-R_x$ are
invertible on $\mathcal H_x$.  Thus
\[
 A_x^{-1}=\mathcal R_{A_x}(0),
 \qquad
 (A_x-R_x)^{-1}=\mathcal R_{A_x-R_x}(0).
\]
Taking $S=A_x-R_x$, $T=A_x$, and $z=0$ in the second resolvent
identity gives
\[
 (A_x-R_x)^{-1}-A_x^{-1}
 =(A_x-R_x)^{-1}R_xA_x^{-1}.
\]
Thus the two inverse operators to which the second resolvent identity is
applied, $A_x^{-1}$ and $(A_x-R_x)^{-1}$, are resolvents at zero.
Neither $A_x$ nor $R_x$ is itself a resolvent; $R_x$ is the perturbation
relating $A_x-R_x$ to $A_x$.
\end{itshape}
\medskip

Using~\eqref{eq:td-inverse-expanded} together with this identity gives
\begin{align}
 M_x^{-1}-4A_x^{-1}
 &=4\bigl((A_x-R_x)^{-1}-A_x^{-1}\bigr)\notag\\
 &=4(A_x-R_x)^{-1}R_xA_x^{-1}.
 \label{eq:td-claim25-resolvent}
\end{align}
Taking norms and substituting
~\eqref{eq:td-claim25-A-margin}--\eqref{eq:td-claim25-M-margin},
\begin{equation}
 \norm{M_x^{-1}-4A_x^{-1}}_\infty
 \le
 \frac{8\eta^3}{(4\eta)(4\eta-2\eta^3)}
 =\frac{\eta}{2-\eta^2}
 \le\eta,
 \label{eq:td-inverse-error}
\end{equation}
where the last inequality holds in particular for $0<\eta\le1/4$.

Define the scalar obtained by replacing $M_x^{-1}$ with $4A_x^{-1}$:
\begin{align}
 \widetilde c_x
 &:={}
 \frac14\bra{\phi_{\HL}^x}\bfZ\ket{\phi_{\LH}^x}
 \notag\\
 &\quad+
 \frac14\bra{\phi_{\HL}^x}
 \bfZ P_{\LL}^x A_x^{-1}P_{\LL}^x\bfZ
 \ket{\phi_{\LH}^x}.
 \label{eq:td-claim25-c-tilde}
\end{align}
Equations~\eqref{eq:td-cx-inverse-form} and~\eqref{eq:td-inverse-error},
together with $\norm{\bfZ}_\infty\le1$, give the pointwise estimate
\begin{align}
 |c_x-\widetilde c_x|
 &\le\frac1{16}
 \norm{\ket{\phi_{\HL}^x}}
 \norm{M_x^{-1}-4A_x^{-1}}_\infty
 \norm{\ket{\phi_{\LH}^x}}\notag\\
 &\le\frac{\eta}{16}
 \norm{\ket{\phi_{\HL}^x}}
 \norm{\ket{\phi_{\LH}^x}}.
 \label{eq:td-claim25-pointwise-error}
\end{align}
Averaging and applying Cauchy--Schwarz therefore yields
\begin{equation}
 \left|\expect_x(c_x-\widetilde c_x)\right|
 \le\frac{\eta}{16}\sqrt{W_{\HL}W_{\LH}}.
 \label{eq:td-claim25-average-error}
\end{equation}

\medskip
\begin{itshape}
\noindent\textbf{Remark (Neumann expansion).}
The Neumann expansion is the operator analogue of the scalar geometric
series.  If $B$ is an operator with $\norm{B}_\infty<1$, then
\[
 (I-B)^{-1}=\sum_{k=0}^{\infty}B^k.
\]
Indeed, its finite partial sums satisfy
\[
 (I-B)\sum_{k=0}^{m}B^k=I-B^{m+1},
\]
and $\norm{B^{m+1}}_\infty\le\norm{B}_\infty^{m+1}\to0$.  Thus the
series converges in operator norm and its limit is the inverse of
$I-B$.  In our setting the underlying space is
$\operatorname{ran}(P_{\LL}^x)$, its identity is $P_{\LL}^x$, and
\[
 B=\frac{Q_x}{1+4\eta}.
\]
The first line of~\eqref{eq:td-geometric-inverse} verifies the required
condition $\norm{B}_\infty<1$.  See, for example, Horn and
Johnson~\cite[p.~365]{HJ12}.
\end{itshape}
\medskip

It remains to expand $A_x^{-1}$.  On $\operatorname{ran}(P_{\LL}^x)$,
the projection $P_{\LL}^x$ is the identity.  The norm condition and the
resulting Neumann expansion are
\begin{align}
 \norm{\frac{Q_x}{1+4\eta}}_\infty
 &\le\frac{\norm{Q_x}_\infty}{1+4\eta}
 \le\frac{\norm{\bfZ}_\infty}{1+4\eta}
 \le\frac1{1+4\eta}
 <1,
 \notag\\
 A_x^{-1}
 &=\Bigl[(1+4\eta)P_{\LL}^x-Q_x\Bigr]^{-1}
 \notag\\
 &=\frac1{1+4\eta}
   \left[P_{\LL}^x-\frac{Q_x}{1+4\eta}\right]^{-1}
 \notag\\
 &=\frac1{1+4\eta}
   \sum_{k=0}^{\infty}
   \left(\frac{Q_x}{1+4\eta}\right)^k
 \notag\\
 &=\frac{P_{\LL}^x}{1+4\eta}
   +\frac{Q_x}{(1+4\eta)^2}
   +\frac{Q_x^2}{(1+4\eta)^3}
   +\cdots
 \notag\\
 &=\sum_{k=0}^{\infty}(1+4\eta)^{-(k+1)}Q_x^k.
 \label{eq:td-geometric-inverse}
\end{align}
Substituting~\eqref{eq:td-geometric-inverse} into
~\eqref{eq:td-claim25-c-tilde}, and recalling
$Q_x=P_{\LL}^x\bfZ P_{\LL}^x$, shows that
\begin{align}
 \expect_x\widetilde c_x
 &=\frac14\left[
   \Gamma_1+
   \sum_{k=0}^{\infty}(1+4\eta)^{-(k+1)}\Gamma_{k+2}
  \right].
 \label{eq:td-claim25-average-series}
\end{align}
The exchange of the expectation and the series is justified by norm
convergence; more explicitly, the absolute value of the $k$th averaged
matrix element, including its coefficient, is at most
$\sqrt{W_{\HL}W_{\LH}}(1+4\eta)^{-(k+1)}$, whose sum is finite.
Combining~\eqref{eq:td-claim25-average-error} and
~\eqref{eq:td-claim25-average-series} proves
~\eqref{eq:td-series-approximation}.
\end{proof}

The scalar $\Gamma_m$ has a direct algebraic interpretation.  Starting with the $\LH$ component, one applies $m$ copies of $\bfZ$.  After each of the first $m-1$ copies, one keeps only the $\LL$ component; after the final copy, one takes the overlap with the $\HL$ component.  The projections in~\eqref{eq:td-Gammam} only select vector components in the calculation.  They are not measurements performed in the original prediction protocol.

\subsection{Each common-challenge contribution defines an extractor}
\label{sec:nlsd-witness}

We now prove the estimate on $\Gamma_m$ that is needed to sum the geometric series.  Expanding the $m$ copies of $\bfZ$ produces an average over a challenge sequence
$w=(r_1,\ldots,r_m)$.  For each fixed sequence, the local products that appear on Bob's and Charlie's sides can be normalized into Kraus operators of valid local measurements whose outcome is a candidate string $x$.  If the corresponding scalar contribution were large, these measurements would jointly recover the full string with probability larger than $p_{\srch}$.

\begin{lemma}[Every common-challenge contribution is controlled by full-string extraction]
\label{lem:td-gamma-search}
For every integer $m\ge1$,
\begin{equation}
 |\Gamma_m|
 \le\frac{(m+1)\sqrt{p_{\srch}}}{\tau^2}.
 \label{eq:td-Gamma-bound}
\end{equation}
\end{lemma}

\begin{proof}
Let $r_1,\ldots,r_m$ be independent and uniform in $\ftwo^n$, and write
$w=(r_1,\ldots,r_m)$ and $\expect_w=\expect_{r_1,\ldots,r_m}$.
For a fixed word $w$, define
\begin{align}
 V_{x,w}^{B}
 &:={\Pi_x\Delta_{\bob}^{r_m}N_x
 \Delta_{\bob}^{r_{m-1}}N_x\cdots
 \Delta_{\bob}^{r_1}N_x},
 \label{eq:td-VB}\\
 V_{x,w}^{C}
 &:={O_x\Delta_{\charlie}^{r_m}O_x
 \Delta_{\charlie}^{r_{m-1}}O_x\cdots
 \Delta_{\charlie}^{r_1}\Theta_x}.
 \label{eq:td-VC}
\end{align}
For $m=1$, these mean
$V_{x,(r_1)}^B=\Pi_x\Delta_{\bob}^{r_1}N_x$ and
$V_{x,(r_1)}^C=O_x\Delta_{\charlie}^{r_1}\Theta_x$.

We verify explicitly how these products arise.  Because we are in the
case~\eqref{eq:td-hard-case-weights}, the retained $\HL$ and $\LH$
vectors equal the corresponding components of $\ket{\psi^x}$.  For
$m\ge2$, expand each copy of
$\bfZ=\expect_r[\Delta_{\bob}^r\otimes\Delta_{\charlie}^r]$ in
~\eqref{eq:td-Gammam} and insert the endpoint projections.  This gives
\begin{align}
 \Gamma_m
 &=\expect_w\expect_x\bra{\psi^x}
 P_{\HL}^x
 (\Delta_{\bob}^{r_m}\otimes\Delta_{\charlie}^{r_m})
 P_{\LL}^x\cdots P_{\LL}^x
 (\Delta_{\bob}^{r_1}\otimes\Delta_{\charlie}^{r_1})
 P_{\LH}^x
 \ket{\psi^x}.
 \label{eq:td-Gamma-projected-word}
\end{align}
For $m=1$, the same formula has no $P_{\LL}^x$ and follows from
~\eqref{eq:td-Gamma1}.  Substituting the tensor-product formulas for
$P_{\HL}^x,P_{\LH}^x,P_{\LL}^x$ and collecting the Bob and Charlie
factors yields
\begin{equation}
 P_{\HL}^x
 (\Delta_{\bob}^{r_m}\otimes\Delta_{\charlie}^{r_m})
 P_{\LL}^x\cdots P_{\LL}^x
 (\Delta_{\bob}^{r_1}\otimes\Delta_{\charlie}^{r_1})
 P_{\LH}^x
 =V_{x,w}^{B}\otimes V_{x,w}^{C}\otimes I_{\regE}.
 \label{eq:td-Gamma-factorization}
\end{equation}
Reading from right to left confirms the order: Bob starts low, remains
low after each intermediate challenge operator, and ends high; Charlie
starts high, enters the low subspace after the first challenge, and
remains low.  Therefore~\eqref{eq:td-Gamma-projected-word} becomes
\begin{equation}
 \Gamma_m
 =\expect_w\expect_x
 \bra{\psi^x}
 (V_{x,w}^{B}\otimes V_{x,w}^{C}\otimes I_{\regE})
 \ket{\psi^x}.
 \label{eq:td-Gamma-fixed-word}
\end{equation}
The maps $V_{x,w}^{B},V_{x,w}^{C}$ need not be positive and are not
themselves POVM effects.  Their adjoint-products will become effects
after the normalization below.

We next show that the maps can be normalized into measurements.  For every fixed $w$,
\begin{equation}
 \sum_x(V_{x,w}^{B})^\dagger V_{x,w}^{B}
 \preceq(m+1)^2\tau^{-2}I_{\regB}.
 \label{eq:td-Bob-sum}
\end{equation}
To prove this, fix a unit vector $|\zeta\rangle$.  Repeatedly insert $N_x=I-\Pi_x$ to obtain
\begin{align}
 &\Delta_{\bob}^{r_m}N_x\cdots\Delta_{\bob}^{r_1}N_x
 -\Delta_{\bob}^{r_m}\cdots\Delta_{\bob}^{r_1}
 \notag\\
 &\quad=-\sum_{j=1}^{m}
 \Delta_{\bob}^{r_m}N_x\cdots
 \Delta_{\bob}^{r_{j+1}}N_x
 \Delta_{\bob}^{r_j}\Pi_x
 \Delta_{\bob}^{r_{j-1}}\cdots\Delta_{\bob}^{r_1}.
 \label{eq:td-telescope}
\end{align}
Multiplying~\eqref{eq:td-telescope} on the left by $\Pi_x$ gives the
operator identity
\begin{align}
 V_{x,w}^{B}
 &=\Pi_x\Delta_{\bob}^{r_m}\cdots\Delta_{\bob}^{r_1}
 \notag\\
 &\quad-\sum_{j=1}^{m}
 \Pi_x\Delta_{\bob}^{r_m}N_x\cdots
 \Delta_{\bob}^{r_{j+1}}N_x
 \Delta_{\bob}^{r_j}\Pi_x
 \Delta_{\bob}^{r_{j-1}}\cdots\Delta_{\bob}^{r_1}.
 \label{eq:td-Bob-telescope-after-Pi}
\end{align}
For a unit vector $\ket{\zeta}$, define
\begin{equation}
 \ket{\zeta_0}
 :=\Delta_{\bob}^{r_m}\cdots\Delta_{\bob}^{r_1}\ket{\zeta},
 \qquad
 \ket{\zeta_j}
 :=\Delta_{\bob}^{r_{j-1}}\cdots\Delta_{\bob}^{r_1}\ket{\zeta}
 \quad(1\le j\le m),
 \label{eq:td-zeta-vectors}
\end{equation}
where the product defining $\ket{\zeta_1}$ is empty and therefore equals
the identity.  These vectors do not depend on $x$, and each has norm at
most one.  In the $j$th summand of
~\eqref{eq:td-Bob-telescope-after-Pi}, every operator to the left of the
displayed inner copy of $\Pi_x$ is a contraction.  The triangle
inequality therefore gives
\begin{equation}
 \norm{V_{x,w}^{B}\ket{\zeta}}
 \le\sum_{j=0}^{m}\norm{\Pi_x\ket{\zeta_j}},
 \label{eq:td-pointwise-VB}
\end{equation}

\medskip
\begin{itshape}
\noindent\textbf{Remark (Minkowski's inequality).}
For any nonnegative numbers $a_{j,x}$, Minkowski's inequality for the
$\ell_2$ norm states that
\[
 \left(\sum_x\left(\sum_{j=0}^{m}a_{j,x}\right)^2\right)^{1/2}
 \le
 \sum_{j=0}^{m}\left(\sum_x a_{j,x}^2\right)^{1/2}.
\]
This is simply the triangle inequality in Euclidean space.  For each $j$,
regard $a_j=(a_{j,x})_x$ as a vector indexed by $x$; the displayed inequality
is $\norm{\sum_j a_j}_2\le\sum_j\norm{a_j}_2$.  In our application,
$a_{j,x}=\norm{\Pi_x\ket{\zeta_j}}$.  Equation~\eqref{eq:td-pointwise-VB}
first bounds each $\norm{V_{x,w}^{B}\ket{\zeta}}$ by
$\sum_j a_{j,x}$; monotonicity of the $\ell_2$ norm and the displayed
inequality then give the first line below.
\end{itshape}
\medskip

Taking the $\ell_2$ norm over $x$, applying Minkowski's inequality, and
using~\eqref{eq:td-threshold-sums},
\begin{align}
 \left(\sum_x\norm{V_{x,w}^{B}\ket{\zeta}}^2\right)^{1/2}
 &\le\sum_{j=0}^{m}
 \left(\sum_x\norm{\Pi_x\ket{\zeta_j}}^2\right)^{1/2}
 \notag\\
 &=\sum_{j=0}^{m}
 \left(
  \bra{\zeta_j}\left(\sum_x\Pi_x\right)\ket{\zeta_j}
 \right)^{1/2}
 \notag\\
 &\le\frac{m+1}{\tau}.
 \label{eq:td-Bob-Minkowski}
\end{align}
Squaring~\eqref{eq:td-Bob-Minkowski} shows that every unit vector
$\ket{\zeta}$ satisfies
\begin{align}
 \bra{\zeta}
 \left(\sum_x(V_{x,w}^{B})^\dagger V_{x,w}^{B}\right)
 \ket{\zeta}
 &=\sum_x\norm{V_{x,w}^{B}\ket{\zeta}}^2
 \le\frac{(m+1)^2}{\tau^2}.
 \label{eq:td-Bob-quadratic-form}
\end{align}
The quadratic-form characterization of positive-semidefinite order now
proves~\eqref{eq:td-Bob-sum}.

For Charlie, write
\begin{equation}
 V_{x,w}^{C}=K_{x,w}^{C}\Theta_x,
 \qquad
 K_{x,w}^{C}
 :=O_x\Delta_{\charlie}^{r_m}O_x
 \Delta_{\charlie}^{r_{m-1}}O_x\cdots
 \Delta_{\charlie}^{r_1}.
 \label{eq:td-Charlie-K}
\end{equation}
Every factor in $K_{x,w}^{C}$ is a contraction, so
$\norm{K_{x,w}^{C}}_\infty\le1$.  Hence
\begin{align}
 (V_{x,w}^{C})^\dagger V_{x,w}^{C}
 &=\Theta_x(K_{x,w}^{C})^\dagger K_{x,w}^{C}\Theta_x
 \preceq\Theta_x.
 \label{eq:td-Charlie-effect}
\end{align}
Summing this inequality and invoking~\eqref{eq:td-threshold-sums} gives
\begin{equation}
 \sum_x(V_{x,w}^{C})^\dagger V_{x,w}^{C}
 \preceq\tau^{-2}I_{\regC}.
 \label{eq:td-Charlie-sum}
\end{equation}
For this fixed $w$, define effects indexed by candidate output strings:
\begin{equation}
 F_{\bob,w}^x
 :=\frac{\tau^2}{(m+1)^2}
 (V_{x,w}^{B})^\dagger V_{x,w}^{B},
 \qquad
 F_{\charlie,w}^x
 :=\tau^2(V_{x,w}^{C})^\dagger V_{x,w}^{C}.
 \label{eq:td-fixed-word-effects}
\end{equation}
Equations~\eqref{eq:td-Bob-sum} and~\eqref{eq:td-Charlie-sum}
imply
\begin{equation}
 \sum_xF_{\bob,w}^x\preceq I_{\regB},
 \qquad
 \sum_xF_{\charlie,w}^x\preceq I_{\regC},
 \label{eq:td-fixed-word-subpovm}
\end{equation}
so the two families are local sub-POVMs.  Fix any string $x_\star$.
Adding the residual effects
\begin{equation}
 I_{\regB}-\sum_xF_{\bob,w}^x,
 \qquad
 I_{\regC}-\sum_xF_{\charlie,w}^x
 \label{eq:td-fixed-word-residuals}
\end{equation}
to the outcomes labeled $x_\star$ completes the families to full POVMs.
Every completed effect dominates the corresponding sub-POVM effect, so
this completion cannot decrease their joint correct-label
probability.

The completed measurements may depend on the fixed auxiliary word $w$,
which is hardwired independently of the hidden string.  This is allowed
because $p_{\srch}$ upper-bounds every pair of local full-string POVMs.  The index $x$ in~\eqref{eq:td-fixed-word-effects} is
only a candidate output label: the constructed measurement is not given
the true hidden string.  Applying the definition of $p_{\srch}$ to these
completed POVMs and then using the subeffects gives
\begin{align}
 p_{\srch}
 &\ge\expect_x
 \Tr\left[
  (F_{\bob,w}^x\otimes F_{\charlie,w}^x)\rho_x
 \right]
 \notag\\
 &=\frac{\tau^4}{(m+1)^2}\expect_x
 \Tr\left[
  \bigl(
   (V_{x,w}^{B})^\dagger V_{x,w}^{B}
   \otimes
   (V_{x,w}^{C})^\dagger V_{x,w}^{C}
  \bigr)\rho_x
 \right]
 \notag\\
 &=\frac{\tau^4}{(m+1)^2}\expect_x
 \norm{
  (V_{x,w}^{B}\otimes V_{x,w}^{C}\otimes I_{\regE})
  \ket{\psi^x}
 }^2.
 \label{eq:td-fixed-word-success}
\end{align}
Rearranging~\eqref{eq:td-fixed-word-success} yields
\begin{equation}
 \expect_x
 \norm{
 (V_{x,w}^{B}\otimes V_{x,w}^{C}\otimes I_{\regE})
 \ket{\psi^x}
 }^2
 \le\frac{(m+1)^2p_{\srch}}{\tau^4}.
 \label{eq:td-fixed-word-square}
\end{equation}
For this fixed word, put
\begin{equation}
 s_w:=\expect_x\bra{\psi^x}
 (V_{x,w}^{B}\otimes V_{x,w}^{C}\otimes I_{\regE})
 \ket{\psi^x}.
 \label{eq:td-fixed-word-scalar}
\end{equation}
Since every $\ket{\psi^x}$ is a unit vector, the triangle inequality,
Cauchy--Schwarz for each inner product, and Cauchy--Schwarz over $x$
give
\begin{align}
 |s_w|
 &\le\expect_x\left|
 \bra{\psi^x}
 (V_{x,w}^{B}\otimes V_{x,w}^{C}\otimes I_{\regE})
 \ket{\psi^x}
 \right|
 \notag\\
 &\le\expect_x
 \norm{
  (V_{x,w}^{B}\otimes V_{x,w}^{C}\otimes I_{\regE})
  \ket{\psi^x}
 }
 \notag\\
 &\le
 \left(
  \expect_x
  \norm{
   (V_{x,w}^{B}\otimes V_{x,w}^{C}\otimes I_{\regE})
   \ket{\psi^x}
  }^2
 \right)^{1/2}
 \notag\\
 &\le\frac{(m+1)\sqrt{p_{\srch}}}{\tau^2},
 \label{eq:td-fixed-word-amplitude}
\end{align}
where the last step is~\eqref{eq:td-fixed-word-square}.  Finally,
~\eqref{eq:td-Gamma-fixed-word} and the triangle inequality over the
uniform word $w$ yield
\begin{equation}
 |\Gamma_m|
 =\left|\expect_ws_w\right|
 \le\expect_w|s_w|
 \le\frac{(m+1)\sqrt{p_{\srch}}}{\tau^2}.
 \label{eq:td-Gamma-final-average}
\end{equation}
This is~\eqref{eq:td-Gamma-bound}.
\end{proof}

\subsection{Complete the top-down estimate}
\label{sec:nlsd-complete}

We now return to the single remaining target~\eqref{eq:td-cx-target}.
Combining~\eqref{eq:td-series-approximation} and~\eqref{eq:td-Gamma-bound},
\begin{align}
 \left|\expect_xc_x\right|
 &\le
 \frac14\left[
  2+\sum_{k=0}^{\infty}(k+3)(1+4\eta)^{-(k+1)}
 \right]
 \frac{\sqrt{p_{\srch}}}{\tau^2}
 \notag\\
 &\qquad
 +\frac{\eta}{16}\sqrt{W_{\HL}W_{\LH}}.
 \label{eq:td-cx-before-sum}
\end{align}
The geometric-series identities
\[
 \sum_{k=0}^{\infty}(1+4\eta)^{-(k+1)}=\frac1{4\eta},
 \qquad
 \sum_{k=0}^{\infty}(k+1)(1+4\eta)^{-(k+1)}
 =\frac{1+4\eta}{16\eta^2}
\]
give
\begin{equation}
 \frac14\left[
  2+\sum_{k=0}^{\infty}(k+3)(1+4\eta)^{-(k+1)}
 \right]
 =\frac12+\frac1{64\eta^2}+\frac3{16\eta}
 \le\frac3{32\eta^2},
 \label{eq:td-geometric-bound}
\end{equation}
where the final inequality uses $0<\eta\le1/4$.
Thus
\begin{equation}
 \left|\expect_xc_x\right|
 \le
 \frac{3\sqrt{p_{\srch}}}{32\tau^2\eta^2}
 +\frac{\eta}{16}\sqrt{W_{\HL}W_{\LH}}.
 \label{eq:td-cx-raw}
\end{equation}
Using~\eqref{eq:td-hard-case-weights},~\eqref{eq:td-parameters}, and~\eqref{eq:td-temporary-assumption},
\begin{align}
 \frac{3\sqrt{p_{\srch}}}
 {32\tau^2\eta^2\sqrt{W_{\HL}W_{\LH}}}
 &\le
 \frac{3\sqrt{p_{\srch}}}{32\tau^2\eta^2\delta}
 \le\frac{3\eta}{32}.
 \label{eq:td-first-constant}
\end{align}
Therefore
\begin{equation}
 \frac{|\expect_xc_x|}{\sqrt{W_{\HL}W_{\LH}}}
 \le\frac{3\eta}{32}+\frac{\eta}{16}
 =\frac{5\eta}{32}.
 \label{eq:td-cx-ratio}
\end{equation}
Since
\begin{equation}
 \eta-\frac{\eta^3}{2}\ge\frac{31\eta}{32},
 \label{eq:td-margin}
\end{equation}
the desired bound~\eqref{eq:td-cx-target} holds with room to spare.
The reduction~\eqref{eq:td-main-reduction} now gives
\begin{equation}
 \expect_x\bra{\phi^x}\bfG_x\ket{\phi^x}
 \le
 \left(\frac12+\eta\right)
 \expect_x\norm{\ket{\phi^x}}^2
 \le\frac12+\eta.
 \label{eq:td-retained-finished}
\end{equation}
This conclusion also held in the easy case~\eqref{eq:td-easy-case}.  Therefore it is valid in all cases.

Returning to~\eqref{eq:td-deletion-bound},
\begin{equation}
 p_{\pred}\le\frac12+\eta+2\sqrt q+q.
 \label{eq:td-p-with-q}
\end{equation}
By~\eqref{eq:td-q-bound},~\eqref{eq:td-parameters}, and~\eqref{eq:td-temporary-assumption},
\begin{align}
 q
 &\le\frac{\eta^{22}}{3072^2\eta^{12}}
 +\frac{3\eta^2}{3072}
 \notag\\
 &=\eta^2\left(\frac{\eta^8}{3072^2}+\frac1{1024}\right)
 <\frac{\eta^2}{1000}.
 \label{eq:td-q-numerical}
\end{align}
Hence
\begin{equation}
 2\sqrt q+q
 <\frac{2\eta}{\sqrt{1000}}+\frac{\eta^2}{1000}
 <0.064\eta.
 \label{eq:td-q-cost}
\end{equation}
Substituting into~\eqref{eq:td-p-with-q} proves~\eqref{eq:td-conditional-goal}.

Finally, since $p_{\srch}\ge2^{-n}>0$, suppose first that
$p_{\srch}^{1/22}<1/10$.  Choose
\begin{equation}
 \eta:=\frac52p_{\srch}^{1/22}.
 \label{eq:td-final-eta}
\end{equation}
Then $\eta<1/4$, and~\eqref{eq:td-temporary-assumption} holds because
\[
 \frac{\eta^{22}}{3072^2}
 =p_{\srch}\frac{(5/2)^{22}}{3072^2}
 \ge p_{\srch}.
\]
Thus
\[
 p_{\pred}
 <\frac12+1.064\cdot\frac52p_{\srch}^{1/22}
 <\frac12+5p_{\srch}^{1/22}.
\]
If $p_{\srch}^{1/22}\ge1/10$, then
$1/2+5p_{\srch}^{1/22}\ge1$, and the theorem follows from $p_{\pred}\le1$.
This proves~\eqref{eqn:thm:ftwo:main-bound}.  The asymptotic consequence
follows because taking a fixed positive root preserves negligibility.
\end{proof}

\section*{Glossary of Notation}

The table below records the notation used throughout the statement and proof.
Operators on $\regB\regC$ are understood to act as the identity on the
purification register $\regE$ whenever that register is present.

\begingroup
\small
\renewcommand{\arraystretch}{1.18}
\begin{longtable}{>{\raggedright\arraybackslash}p{0.25\textwidth}
                  >{\raggedright\arraybackslash}p{0.68\textwidth}}
\textbf{Notation} & \textbf{Meaning} \\
\hline
\endfirsthead
\textbf{Notation} & \textbf{Meaning} \\
\hline
\endhead

\multicolumn{2}{l}{\textbf{Basic notation and probability}}\\[0.2em]
$[n]$ & The set $\{1,\ldots,n\}$.\\
$\ftwo^n$ & The $n$-dimensional vector space of binary strings.\\
$\langle r,x\rangle$ & The mod-$2$ inner product of $r,x\in\ftwo^n$.\\
$\chi_x(r)$ & The character $(-1)^{\langle r,x\rangle}$.\\
$s\xleftarrow{\$}S$ & A uniformly random sample from the finite set $S$.\\
$\expect_x$ & Expectation over a uniformly random $x$; multiple subscripts denote independent uniform choices unless stated otherwise.\\
$\linear(\regR),\density(\regR)$ & Linear operators and density operators on register $\regR$.\\
$I_{\regR},\Tr$ & Identity on $\regR$ and the trace.\\
$M\preceq N$ & Loewner order: $N-M$ is positive semidefinite.\\
$\norm{M}_\infty$ & Operator norm of $M$.\\
$\operatorname{ran}(P)$ & Range of the operator or projection $P$.\\[0.4em]

\multicolumn{2}{l}{\textbf{Registers, states, and experiments}}\\[0.2em]
$\regX$ & Classical input register containing the basis state indexed by $x$.\\
$\regB,\regC$ & Quantum share registers held by Bob and Charlie, respectively.\\
$\regE$ & Auxiliary register used to purify $\rho_x$.\\
$\Phi$ & Channel mapping the classical input encoding to the bipartite state on $\regB\regC$.\\
$\rho_x$ & The bipartite state indexed by $x$; in the introductory channel formulation, $\rho_x=\Phi(\ketbra{x}{x})$.\\
$\boldsymbol{\rho}$ & The uniformly indexed ensemble $(\rho_x)_{x\in\ftwo^n}$.\\
$\ket{\psi^x}$ & A fixed purification of $\rho_x$ on $\regB\regC\regE$.\\
$\Lambda_{\bob}^{\srch},\Lambda_{\charlie}^{\srch}$ & Local full-string POVMs with effects $\{\extractor_{\bob}^z\}_z$ and $\{\extractor_{\charlie}^z\}_z$.\\
$p_{\srch}(\boldsymbol{\rho})$ & Optimal probability that both local full-string POVMs output the correct uniformly sampled ensemble index $x$; it lies in $[2^{-n},1]$.  Inside the proof this quantity is abbreviated as $p_{\srch}$.\\
$\Lambda_{\bob}^{\pred}[r],\Lambda_{\charlie}^{\pred}[r]$ & Local binary POVMs used when both parties receive challenge $r$.\\
$p_{\pred}$ & Probability that both parties output $\langle r,x\rangle$ in the same prediction experiment.\\[0.4em]

\multicolumn{2}{l}{\textbf{Prediction and Fourier operators}}\\[0.2em]
$\Delta_{\bob}^r,\Delta_{\charlie}^r$ & Difference observables $\predpovm^{r,0}-\predpovm^{r,1}$ for Bob and Charlie.\\
$\bfX^x,\bfY^x$ & Fourier coefficients $\expect_r[\chi_x(r)\Delta_{\bob}^r]$ and $\expect_r[\chi_x(r)\Delta_{\charlie}^r]$.\\
$\bfZ$ & Common-challenge correlation operator $\expect_r[\Delta_{\bob}^r\otimes\Delta_{\charlie}^r]$.\\
$\bfG_x$ & Effect for the event that both parties correctly answer the parity challenge for hidden string $x$.\\
$\eta,\tau,\delta$ & Proof parameters, with $\tau=\eta^3$ and $\delta=\eta^2/3072$.\\[0.4em]

\multicolumn{2}{l}{\textbf{High/low decomposition}}\\[0.2em]
$\Pi_x,N_x$ & Bob's high and low spectral projections for $\bfX^x$: $\Pi_x$ projects onto eigenvectors whose eigenvalues $\lambda$ satisfy $|\lambda|\ge\tau$, and $N_x=I-\Pi_x$.\\
$\Theta_x,O_x$ & Charlie's high and low spectral projections for $\bfY^x$: $\Theta_x$ projects onto eigenvectors whose eigenvalues $\mu$ satisfy $|\mu|\ge\tau$, and $O_x=I-\Theta_x$.\\
$P_U^x$ & Joint projection for $U\in\{\HH,\HL,\LH,\LL\}$; the first letter refers to Bob and the second to Charlie.\\
$\ket{\psi_U^x}$ & Component $P_U^x\ket{\psi^x}$ of the purified state.\\
$\operatorname{wt}(U)$ & Average component weight $\expect_x\norm{\ket{\psi_U^x}}^2$.\\
$\kappa_U$ & Indicator that the average weight of component family $U$ is at least $\delta$.\\
$\ket{\phi_U^x},\ket{\phi^x}$ & Retained component $\kappa_U\ket{\psi_U^x}$ and their sum over $U\in\{\HL,\LH,\LL\}$.\\
$\ket{e^x},q$ & Omitted vector $\ket{\psi^x}-\ket{\phi^x}$ and its average squared norm; this scalar $q$ is unrelated to a field size.\\
$W_U$ & Retained average weight $\expect_x\norm{\ket{\phi_U^x}}^2$.\\[0.4em]

\multicolumn{2}{l}{\textbf{Effective cross term and witness extractors}}\\[0.2em]
$M_x$ & Compression $P_{\LL}^x((1/2+\eta)I-\bfG_x)P_{\LL}^x$, viewed as an operator on $\operatorname{ran}(P_{\LL}^x)$.\\
$\ket{u_x}$ & Proof-local vector $P_{\LL}^x\bfG_x(\ket{\phi_{\HL}^x}+\ket{\phi_{\LH}^x})$ used in completing the square.\\
$c_x$ & Effective $\HL$--$\LH$ cross scalar after minimizing over the $\LL$ component.\\
$Q_x,R_x,A_x$ & Proof-local compressed operators defined in~\eqref{eq:td-claim25-QRA}; they satisfy $4M_x=A_x-R_x$.\\
$\widetilde c_x$ & Approximation to $c_x$ obtained by replacing $M_x^{-1}$ with the geometric inverse.\\
$\Gamma_m$ & Averaged contribution containing $m$ copies of the common-challenge operator $\bfZ$.\\
$w=(r_1,\ldots,r_m)$ & Auxiliary word of independent uniform challenge vectors; after fixing $w$, it is hardwired into the witness extraction measurements.\\
$V_{x,w}^{B},V_{x,w}^{C}$ & Fixed-word local products whose normalized adjoint-products define witness extraction effects.\\

\hline
\end{longtable}
\endgroup


\begin{thebibliography}{99}
\raggedright

\bibitem{GL89}
O.~Goldreich and L.~A. Levin.
\newblock A hard-core predicate for all one-way functions.
\newblock In \emph{Proceedings of the 21st Annual ACM Symposium on Theory of
Computing}, pages 25--32, 1989.
\newblock \href{https://doi.org/10.1145/73007.73010}{doi:10.1145/73007.73010}.

\bibitem{AC02}
M.~Adcock and R.~Cleve.
\newblock A quantum Goldreich--Levin theorem with cryptographic applications.
\newblock In \emph{STACS 2002}, volume 2285 of \emph{LNCS}, pages 323--334,
2002.
\newblock \href{https://doi.org/10.1007/3-540-45841-7_26}{doi:10.1007/3-540-45841-7\_26}.

\bibitem{HJ12}
R.~A. Horn and C.~R. Johnson.
\newblock \emph{Matrix Analysis}.
\newblock Second edition, Cambridge University Press, 2012.
\newblock \href{https://doi.org/10.1017/CBO9781139020411}{doi:10.1017/CBO9781139020411}.

\bibitem{Teschl2014}
G.~Teschl.
\newblock \emph{Mathematical Methods in Quantum Mechanics: With
Applications to Schrödinger Operators}.
\newblock Second edition, volume 157 of \emph{Graduate Studies in
Mathematics}, American Mathematical Society, 2014.
\newblock \href{https://doi.org/10.1090/gsm/157}{doi:10.1090/gsm/157}.

\bibitem{Wat18}
J.~Watrous.
\newblock \emph{The Theory of Quantum Information}.
\newblock Cambridge University Press, 2018.
\newblock \href{https://doi.org/10.1017/9781316848142}{doi:10.1017/9781316848142}.


\bibitem{CLLZ21}
A.~Coladangelo, J.~Liu, Q.~Liu, and M.~Zhandry.
\newblock Hidden cosets and applications to unclonable cryptography.
\newblock In \emph{Advances in Cryptology -- CRYPTO 2021, Part I}, volume
12825 of \emph{LNCS}, pages 556--584, 2021.
\newblock \href{https://doi.org/10.1007/978-3-030-84242-0_20}{doi:10.1007/978-3-030-84242-0\_20}.

\bibitem{KT25}
S.~Kundu and E.~Y.-Z. Tan.
\newblock Device-independent uncloneable encryption.
\newblock \emph{Quantum}, 9:1582, 2025.
\newblock \href{https://doi.org/10.22331/q-2025-01-08-1582}{doi:10.22331/q-2025-01-08-1582}.

\bibitem{AKL23}
P.~Ananth, F.~Kaleoglu, and Q.~Liu.
\newblock Cloning games: A general framework for unclonable primitives.
\newblock In \emph{Advances in Cryptology -- CRYPTO 2023, Part V}, pages
66--98, 2023.
\newblock \href{https://doi.org/10.1007/978-3-031-38554-4_3}{doi:10.1007/978-3-031-38554-4\_3}.

\bibitem{BKL24}
A.~Broadbent, M.~Karvonen, and S.~Lord.
\newblock Uncloneable quantum advice.
\newblock \emph{IACR Communications in Cryptology}, 1(3), 2024.
\newblock \href{https://doi.org/10.62056/abe0fhbmo}{doi:10.62056/abe0fhbmo}.

\bibitem{CG24}
A.~Coladangelo and S.~Gunn.
\newblock How to use quantum indistinguishability obfuscation.
\newblock In \emph{Proceedings of the 56th Annual ACM Symposium on Theory of
Computing}, pages 1003--1008, 2024.
\newblock \href{https://doi.org/10.1145/3618260.3649779}{doi:10.1145/3618260.3649779}.

\bibitem{AB24}
P.~Ananth and A.~Behera.
\newblock A modular approach to unclonable cryptography.
\newblock In \emph{Advances in Cryptology -- CRYPTO 2024, Part VII}, volume 14926 of
\emph{LNCS}, pages 3--37, 2024.
\newblock \href{https://doi.org/10.1007/978-3-031-68394-7_1}{doi:10.1007/978-3-031-68394-7\_1}.

\bibitem{AKY25}
P.~Ananth, F.~Kaleoglu, and H.~Yuen.
\newblock Simultaneous Haar indistinguishability with applications to
unclonable cryptography.
\newblock In \emph{ITCS 2025}, volume 325 of \emph{LIPIcs}, article 7, 2025.
\newblock \href{https://doi.org/10.4230/LIPIcs.ITCS.2025.7}{doi:10.4230/LIPIcs.ITCS.2025.7}.


\bibitem{CHV23}
C.~Chevalier, P.~Hermouet, and Q.-H.~Vu.
\newblock Towards unclonable cryptography in the plain model.
\newblock Cryptology ePrint Archive, Paper 2023/1825, 2023.
\newblock \href{https://eprint.iacr.org/2023/1825}{ePrint 2023/1825}.

\bibitem{CakanGoyal24}
A.~\c{C}akan and V.~Goyal.
\newblock Unbounded leakage-resilient encryption and signatures.
\newblock Cryptology ePrint Archive, Paper 2024/1876, 2024.
\newblock \href{https://eprint.iacr.org/2024/1876}{ePrint 2024/1876}.

\bibitem{CKNY25}
J.~Champion, F.~Kitagawa, R.~Nishimaki, and T.~Yamakawa.
\newblock Untelegraphable encryption and its applications.
\newblock In \emph{Theory of Cryptography -- TCC 2025, Part III}, volume
16270 of \emph{LNCS}, pages 3--35, 2025.
\newblock \href{https://doi.org/10.1007/978-3-032-12296-4_1}{doi:10.1007/978-3-032-12296-4\_1}.

\bibitem{CLX26}
A.~Coladangelo, Q.~Liu, and Z.~Xie.
\newblock The curious case of ``XOR repetition'' of monogamy-of-entanglement
games.
\newblock In \emph{ITCS 2026}, volume 362 of \emph{LIPIcs}, article 41, 2026.
\newblock \href{https://doi.org/10.4230/LIPIcs.ITCS.2026.41}{doi:10.4230/LIPIcs.ITCS.2026.41}.


\bibitem{BL20}
A.~Broadbent and S.~Lord.
\newblock Uncloneable quantum encryption via oracles.
\newblock In \emph{15th Conference on the Theory of Quantum Computation,
Communication and Cryptography (TQC 2020)}, volume 158 of \emph{LIPIcs},
pages 4:1--4:22, 2020.
\newblock \href{https://doi.org/10.4230/LIPIcs.TQC.2020.4}{doi:10.4230/LIPIcs.TQC.2020.4}.

\bibitem{AK21}
P.~Ananth and F.~Kaleoglu.
\newblock Unclonable encryption, revisited.
\newblock In \emph{Theory of Cryptography -- TCC 2021, Part I}, volume 13042
of \emph{LNCS}, pages 299--329, 2021.
\newblock \href{https://doi.org/10.1007/978-3-030-90459-3_11}{doi:10.1007/978-3-030-90459-3\_11}.

\bibitem{AKLLZ22}
P.~Ananth, F.~Kaleoglu, X.~Li, Q.~Liu, and M.~Zhandry.
\newblock On the feasibility of unclonable encryption, and more.
\newblock In \emph{Advances in Cryptology -- CRYPTO 2022, Part II}, volume
13508 of \emph{LNCS}, pages 212--241, 2022.
\newblock \href{https://doi.org/10.1007/978-3-031-15979-4_8}{doi:10.1007/978-3-031-15979-4\_8}.

\bibitem{GZ20}
M.~Georgiou and M.~Zhandry.
\newblock Unclonable decryption keys.
\newblock Cryptology ePrint Archive, Paper 2020/877, 2020.
\newblock \href{https://eprint.iacr.org/2020/877}{ePrint 2020/877}.

\bibitem{AS26}
P.~Ananth and A.~Sahai.
\newblock Unconditional unclonable encryption.
\newblock arXiv:2607.21551, 2026.
\newblock \url{https://arxiv.org/abs/2607.21551}.

\bibitem{Rag26}
S.~Ragavan.
\newblock Efficient unclonable encryption from Pauli eigenstates.
\newblock arXiv:2607.21811, 2026.
\newblock \url{https://arxiv.org/abs/2607.21811}.

\bibitem{BBC26}
A.~Bhattacharyya, A.~Broadbent, and E.~Culf.
\newblock The uncloneable bit exists.
\newblock arXiv:2603.08916v2, 2026.
\newblock \url{https://arxiv.org/abs/2603.08916}.

\end{thebibliography}
\end{document}